\theoremstyle{definition}
\newtheorem{remark}{Remark}
\newtheorem{thm}{Theorem}
\newtheorem{prob}{Problem}
\newtheorem{defi}{Definition}
\definecolor{gray}{rgb}{0.5,0.5,0.5}
\def\tcr{\textcolor{red}}
\definecolor{islamicgreen}{rgb}{0.0, 0.56, 0.0}
\begin{document}

\title{A Reinforcement Learning Formulation of the Lyapunov Optimization:  Application to Edge Computing Systems with Queue Stability}

\author{Sohee Bae,~\IEEEmembership{Student Member,~IEEE},
Seungyul Han,~\IEEEmembership{Student Member,~IEEE},
and Youngchul Sung$^\dagger$,~\IEEEmembership{Senior Member,~IEEE} 
\thanks{This works was supported in part by Institute for Information \& communications Technology Promotion (IITP) grant funded by the Korea government (MSIT) (2019-0-00544, Building a simulation environment for machine learning-based edge computing operation optimization),     and in part by the National Research Foundation of Korea(NRF) grant funded by the Korea government. (MSIT) (2017R1E1A1A03070788)
 The authors are with the School of Electrical Engineering, KAIST, Daejeon, South Korea, 34141. Email: \{sh.bae, sy.han, ycsung\}@kaist.ac.kr.  $\dagger$ Corresponding author.
}}

\markboth{Submitted to IEEE Transactions on Networking}
{Shell \MakeLowercase{\textit{et al.}}: Bare Demo of IEEEtran.cls for Journals}

\maketitle

\begin{abstract}
In this paper, a deep reinforcement learning (DRL)-based approach to the Lyapunov optimization is considered  to minimize the time-average penalty while maintaining queue stability. A proper construction of state and action spaces is provided to form a proper Markov decision process (MDP) for the Lyapunov optimization. A condition for the reward function of reinforcement learning (RL) for queue stability is derived. Based on the analysis and practical RL with reward discounting, a class of reward functions is proposed for the DRL-based approach to the Lyapunov optimization.   The proposed DRL-based approach to the  Lyapunov optimization  does not required complicated optimization at each time step and operates with general non-convex and discontinuous penalty functions.   Hence, it  provides an alternative to the conventional drift-plus-penalty (DPP) algorithm  for the Lyapunov optimization. The proposed DRL-based approach is applied to resource allocation in edge computing systems with queue stability and numerical results demonstrate its successful operation.
\end{abstract}

\IEEEpeerreviewmaketitle

\section{Introduction}\label{sec:introduction}






The Lyapunov optimization in queueing networks is a well-known method to minimize a certain operating cost function while stabilizing queues in a  network \cite{neely2005, neely2010}. In order to stabilize the queues while minimizing the time average of the cost, the famous DPP algorithm minimizes the weighted sum of the drift and the penalty at each time step under the Lyapunov optimization framework. The DPP algorithm is widely used to jointly control the network stability and the penalty such as  power consumption in the traditional network and communication  fields \cite{tassiulas2006, neely2005, neely2010, network1,network2,network3,network4,network5,network6}. The Lyapunov optimization theorem guarantees that the DPP algorithm results in optimality within certain bound under some conditions. The Lyapunov optimization framework has been applied to many problems. For example, the backpressure routing algorithm  can be used for routing in multi-hop queueing networks \cite{tassiulas1990, tassiulas1993} and the DPP algorithm can be used for joint flow control and network routing \cite{neely2006energy, neely2005}.
In addition to these classical applications to conventional communication networks, the Lyapunov framework and the DPP algorithm for optimizing performance under queue stability can be applied to many optimization problems in  emerging systems such as energy harvesting and renewable energy such as smart grid and electric vehicles in which virtual queue techniques can be used to  represent the energy level with queue \cite{energyq1, energyq2, energyq3, energyq4, energyq5, energyq6, energyq7}.

Despite its versatility for the Lyapnov optimization, the DPP algorithm  is an instantaneous greedy algorithm and requires solving a  non-trivial  optimization problem for every time step. Solving  optimization for the DPP algorithm is not easy in case of complicated penalty functions. 
With the recent advances in DRL \cite{minh2015dqn}, RL has gained renewed interest in applications to many control problems for which classical RL not based on deep learning was not so effective \cite{drl_ra1, drl_ra2, drl_ra3, drl_ra4}. In this paper, we consider a DRL-based approach to the Lyapunov optimization  in order to provide an alternative to the DPP algorithm for time-average penalty minimization under queue stability. 
Basic RL is a MDP composed of a state space, an action space, a state transition probability and a policy.  The goal of RL is to learn a policy that maximizes the accumulated expected return \cite{sutton1998}.  The problem of time-average penalty minimization under queue stability 
can be formulated into an RL problem based on queue dynamics and  additive cost function.  The advantage of an RL-based approach is that RL exploits the trajectory of system evolution and does not require any optimization in the execution phase once the control policy is trained.  Furthermore, an RL-based approach can be applied to the case of complicated penalty functions with which numerical optimization at each time step for the DPP algorithm may be difficult.
Even with such advantages of an RL-based approach, an RL formulation for the Lyapunov optimization is not straightforward because of the condition of queue stability. 
The main challenge in an RL formulation of the Lyapunov optimization is how to incorporate the queue stability constraint 
into the RL formulation. Since the goal of RL is to maximize the expected accumulated reward, the desired control behavior is through the reward, and the success and effectiveness of the devised RL-based approach crucially depends on a well-designed reward function as well as good formulation of the state and action spaces.

\subsection{Contributions and Organization}
    
The contributions of this paper are as follows:

    

$\bullet$
We propose a proper MDP structure for the Lyapunov optimization by constructing the state and action spaces so that the formulation yields an MDP with a deterministic reward function, which facilitates learning.

$\bullet$  We propose a class of reward functions yielding queue stability as well as penalty minimization. The derived reward function is based on the relationship between the queue stability condition and the expected  accumulated reward which is the maximization goal of RL. The proposed reward function is in the form of one-step difference to be suited to practical RL which actually maximizes the discounted sum of rewards.

$\bullet$ Using the Soft-Actor Critic (SAC) algorithm \cite{sac}, we demonstrates that the DRL-based approach based on the constructed state and action spaces and the proposed reward function properly learns a policy that minimizes the penalty cost while maintaining queue stability.

$\bullet$ Considering the importance of edge computing systems in the  trend of network-centric computing \cite{201711.JSAC.Sun, 201808.ToN.Chen,2018006.TWC.Bi,201811.TMC.Neto,
201906.IoTJ.Chen2,2020EA.TMC.Huang,201812.TC.Dinh}, we applied the proposed DRL-based approach to the problem of resource allocation in  edge computing systems under queue stability, whereas many previous works investigated resource allocation in edge computing systems from different perspectives not involving queue stability at the edge server. The proposed approach provides a policy to optimal task offloading and self-computation to edge computing systems under task queue stability.

This paper is organized as follows.  In Section \ref{sec:SystemModel},  the system model is provided.  In Section \ref{sec:probstate}, the problem is formulated and the conventional  approach is explained. In Section \ref{sec:rlbased}, the proposed DRL-based approach to the Lyapunov optimization is explained. Implementation and experiments are provided in Sections \ref{sec:implement} and  \ref{sec:numerical}, respectively, followed by conclusion in Section \ref{sec:conclusion}.

\section{System Model}\label{sec:SystemModel}

In this paper, as an example of queuing network control, we consider an edge computing system  composed of an edge computing node, a cloud computing node and multiple mobile user  nodes, and consider the resource allocation problem at the edge computing node equipped with multiple queues. We will simply refer to the edge computing node and the cloud computing node as the edge node and the cloud node, respectively. We assume that there exist $N$ application types in the system, and multiple mobile nodes generate applications belonging to the $N$ application types and offload a certain amount of tasks to the edge  node. 
The edge  node  has $N$ task data queues,  one for each of the $N$ application types, and stores the upcoming tasks offloaded from the multiple mobile nodes according to their application types. Then,  the edge  node performs the tasks offloaded from the mobile nodes by itself or further offloads a  certain amount of tasks to the cloud  node through a communication link established between the edge  node and the cloud node. We assume that the maximum CPU processing clock rate of the edge node is $f_E$ cycles per second and the communication bandwidth between the edge  node and the cloud node is $B$ bits per second.     The considered overall system model is described in Fig. \ref{fig:system_model}.
    
\begin{figure}
        \centering
        \includegraphics[width=\linewidth]{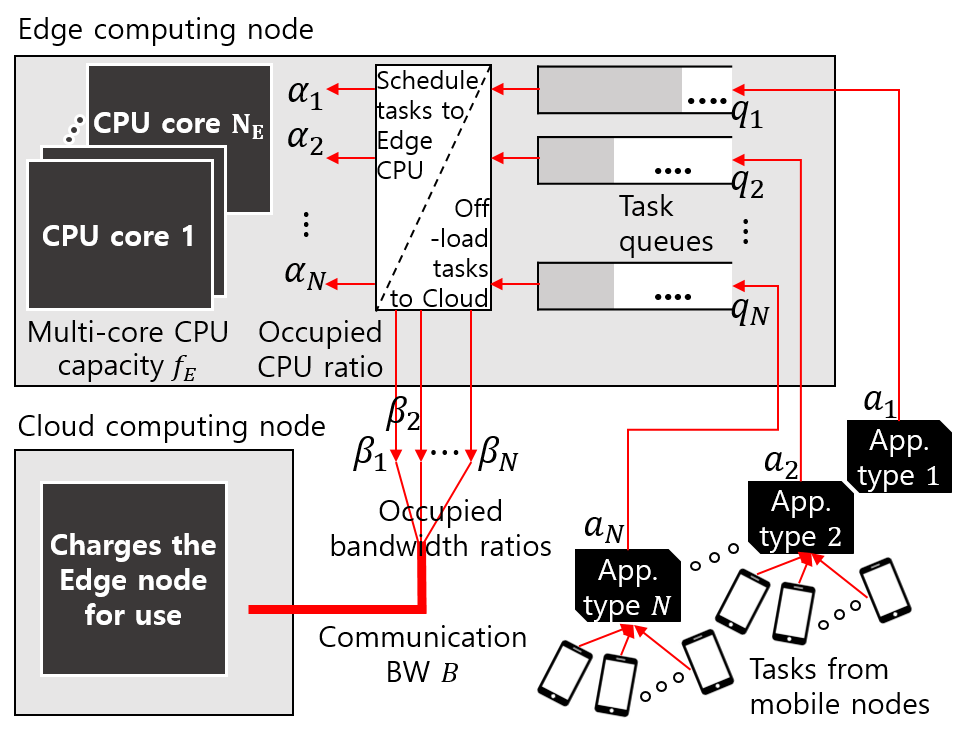}
        \caption{The edge-cloud system model with application queues}
        \label{fig:system_model}
\end{figure}

\subsection{Queue Dynamics}
\label{sec:QueueDynamics}

We assume that the arrival of the $i$-th application-type task at the $i$-th queue in the edge  node follows  a Poisson random process with arrival rate $\lambda_i$ [arrivals per second],  $i=1,\cdots,N$, and assume that the processing at the edge  node is  time-slotted with discrete-time slot index $t=0,1,2,\cdots$.
Let the task data bits offloaded from the mobile nodes to the $i$-th application-type data queue (or simply $i$-th queue) at time\footnote{Time is normalized so that one slot interval is one second  for simplicity.}  $t$ be denoted by $a_i(t)$, i.e.,  $a_i(t)$ is the sum of the arrived task data bits during the time interval $[t,t+1)$.  
We  assume that different application  type has  different work load, i.e., 
it requires a different number of CPU cycles to process one bit of task data for different application type, and  assume that the $i$-th application-type's one task bit requires $w_i$ CPU clock cycles for processing at the edge  node.

For each application-type task data in the corresponding queue, at time $t$ the edge  node determines the amount of allocated CPU processing resource for its own processing and the amount of offloading to the cloud  node while satisfying the imposed constraints. Let 
$\alpha_i(t)$ be the fraction of the edge-node CPU resource allocated  to the $i$-th  queue at the edge  node at time $t$ and let $\beta_i(t)$ be the fraction of the edge-cloud communication bandwidth for offloading  to the cloud  node for the $i$-th queue  at time $t$. 
Then, at the edge  node, $\alpha_i(t)f_E$ clock cycles per second are assigned to the $i$-th application-type task and $\beta_i(t)B$ bits per second for the $i$-th application-type task are aimed to be offloaded to the cloud  node at time $t$. 
Thus, the control action variables at the edge  node are given by
\begin{eqnarray}
 \pmb{\alpha}(t)&=&[\alpha_1(t),\cdots,\alpha_{N}(t)],  \label{eq:vecalpha}\\
    \pmb{\beta}(t)&=&[\beta_1(t),\cdots,\beta_{N}(t)],  \label{eq:vecbeta}
\end{eqnarray} 
where $\pmb{\alpha}(t)$ and $\pmb{\beta}(t)$ satisfy the following constraints:
\begin{equation}\label{eq:policy_constraint}
        \sum_{i=1}^{N} \alpha_i(t) \le 1, \qquad  \sum_{i=1}^{N} \beta_i(t) \le 1\qquad \text{ for all }t.
    \end{equation}
Our goal is to optimize the control variables $\pmb{\alpha}(t)$ and $\pmb{\beta}(t)$ under a certain criterion (which will be explained later) while satisfying the constraint \eqref{eq:policy_constraint}.

The queue dynamics at the edge node is then given by 
     \begin{equation} \label{eq:qdynamics1}
         q_i(t+1)=\left[q_i(t)+a_i(t)-\left(\underbrace{\frac{\alpha_i(t) f_E}{w_i}+\beta_i(t)B}_{=: b_i(t)}\right)\right]^+
     \end{equation}
where   $q_i(t)$ represents the length of the $i$-th queue at the edge  node at time $t$ for $i=1,\cdots,N$ and $[x]^+=\max(0,x)$. The second, third, and fourth terms in the right-hand side (RHS) of (\ref{eq:qdynamics1})  represent the new arrival, the reduction by edge node processing, and the offloading to the cloud node for the $i$-th  queue at time $t$, respectively. Here, the departure $b_i(t)$ at time $t$ is defined as  $\displaystyle b_i(t):= \frac{\alpha_i(t)f_E}{w_i}+\beta_i(t)B$. Note that by defining $b_i(t)$,  \eqref{eq:qdynamics1} reduces to a typical multi-queue dynamics model in queuing theory \cite{neely2010}. However,  in the considered edge-cloud system the departure occurs by two separate operations, computing and offloading, associated with $\pmb{\alpha}(t)$ and $\pmb{\beta}(t)$, and this makes the situation  more complicated.
The amount of actually offloaded task data bits from the edge node to  the cloud node for the $i$-th queue at time $t$ is given by
\begin{equation} \label{eq:aNpi}
 o_{i}(t)=\min\left(\beta_i(t)B, ~q_i(t)\!+\!a_i(t)\!-\!\frac{\alpha_i(t) f_E}{w_i}\right)
\end{equation}
because the remaining amount of task data bits at the $i$-th queue at the edge node can be less than the offloading target bits $\beta_i(t)B$.

For proper system operation, we require the considered edge computing system to be stable. Among several definitions of queuing network stability\cite{neely2010}, we adopt the following definition for stability:

\begin{defi}[{{Strong stability\cite{neely2010}}}] ~A queue  is strongly sta\label{def:ss} ble if
\begin{equation}
     {\lim\sup}_{t\rightarrow\infty}\frac{1}{t}\sum_{\tau=0}^{t-1}\mathbb{E}[q(\tau)]<\infty,
\end{equation}
         where $q(t)$ is the  length of the queue at time $t$.
         \end{defi}
\noindent We consider that the edge computing system  is stable if all the queues $q_i(t), i=1,\cdots,N$ in the edge node  are stable according to Definition \ref{def:ss}.

\subsection{Power Consumption Model and Cost Function}
\label{subsec:PowerConsumptionModel}

In order to model realistic computing environments, we assume multi-core computing at the edge  node, and assume that the edge  node has $N_E$ CPU cores with equal computing capability. We also assume that the Dynamic Voltage Frequency Scaling (DVFS) method is adopted at the edge node. DVFS is a widely-used technique for power consumption reduction, e.g.,  SpeedStep of Intel and Powernow of AMD.  DVFS adjusts   the CPU clock frequency and supply voltage  based on the  required CPU cycles per second to perform  given task in order to reduce power consumption \cite{mittal2014power}. That is, for a computationally easy task, the CPU clock frequency is lowered. On the other hand, for a computationally demanding task,  the CPU clock frequency is raised. Under our assumption that the edge CPU has maximum clock frequency of $f_E$ cycles per second and the edge CPU has $N_E$ CPU cores, each edge CPU core has maximum $f_E/N_E$ clock rate. 
Note from (\ref{eq:vecalpha}) and (\ref{eq:policy_constraint})
that the total assigned computing requirement for the edge  node at time $t$ is $f_E \cdot \sum_{i=1}^N \alpha_i(t)$. This total computing load is distributed to the $N_E$ CPU cores according to a multi-core workload distribution method. Hence, the assigned workload for the $j$-th core of the edge node is given by
\begin{equation}  \label{eq:f_core}
f_{E,j}= g\left(f_E,N_E,  \alpha_1(t), \cdots,\alpha_N(t)\right),    ~j=1,\cdots,N_E,
\end{equation}
where $g(\cdot)$ is the  multi-core workload distribution function of the edge CPU, and depends on individual design.

The power consumption at a CPU core consists mainly of two parts: the dynamic part and the static part \cite{CubicFunction}. We focus on the dynamic power consumption which is dominant as compared to the static part \cite{mangard2008power}. It is known that the dynamic power consumption is modeled as a cubic function of clock frequency, whereas the static part is modelled as a linear function of clock frequency \cite{voltage1, voltage2}.  With the focus on the dynamic part,  the power consumption at a CPU core can be modelled as  \cite{CubicPoor,CubicMao}
\begin{equation}  \label{eq:Pdynamic}
        P_{D} ~=~ \kappa f^3,
    \end{equation}
where $f$ is the CPU core clock rate and $\kappa$ is a constant depending on CPU implementation.  Then, the overall power consumption $C_E(t)$ at the edge  node can be  modelled as
\begin{align} \label{eq:edgeCost}
     C_E(t)&=\sum_{j=1}^{N_E}C_{E,j}(t)   
\end{align}
where $C_{E,j}$ denotes the power consumption at the $j$-th CPU core at the edge  node and is given by (\ref{eq:Pdynamic}) with the core operating clock rate $f$ substituted  by (\ref{eq:f_core}).
Note that for given $f_E$ and $N_E$, the power consumption $C_E(t)$ at time $t$ is a function of the control vector $\pmb{\alpha}(t)$, explicitly shown as
\begin{equation}  \label{eq:costEdget11}
C_E(t) = C_E(\alpha_1(t),\cdots,\alpha_N(t))    
\end{equation}
based on (\ref{eq:f_core}), (\ref{eq:Pdynamic}) and (\ref{eq:edgeCost}).

While $C_E(t)$ is the cost function measured in terms of the required power consumption for the edge  node caused by its own processing, we assume that the cloud  node charges cost $C_C(t)$ to the edge node based on  the amount of workload required to process the offloaded task bits to the cloud node $\sum_{i=1}^N w_i o_i(t)$, where $o_i(t)$ is given by \eqref{eq:aNpi}. 
Since $o_i(t)$ depends on $\pmb{\alpha}(t)$ and $\pmb{\beta}(t)$ as seen in (\ref{eq:aNpi}),  $C_C(t)$ as a function of the control variables is expressed as
\begin{equation} \label{eq:CostCloudt11}
C_C(t) = C_C(\alpha_1(t),\cdots,\alpha_N(t),\beta_1(t),\cdots,\beta_N(t)).   
\end{equation}
We assume that $C_C(t)$ is given in the unit of Watt under the assumption that power and monetary cost are interchangeable. We will use  $C_E(t)$ and $C_C(t)$ as the penalty cost in later sections. Table \ref{table:notations} summarizes the introduced notations.

    \begin{table}[t]
    \centering
    \caption{Summary of notations}\label{table:notations}
    \resizebox{0.48\textwidth}{!}{
        \begin{tabular}{|l|c|c|}
        \hline
        \rowcolor{lightgray}
            \textbf{name} & \textbf{stands for} & \textbf{unit} \\ \hline
            $N$ & Number of application types & -\\\hline
            $f_E$ & Maximum CPU clock rate of the edge  node &cycles/s\\ \hline
            $N_E$ & Number of CPU cores at the edge  node & -\\\hline
            $B$ &   \begin{tabular}[c]{@{}c@{}} Communication bandwidth\\from the edge node to the cloud node\end{tabular}&bits/s\\\hline
            $\lambda_i$ & \begin{tabular}[c]{@{}c@{}}Poisson arrival rate of the $i$-th app. type\\ at the edge  node \end{tabular}& arrivals/s\\\hline
            $f_{E,j}$ & \begin{tabular}[c]{@{}c@{}}
            Assigned workload for the $j$-th CPU core\\ 
            at the edge  node\end{tabular} &cycles/s\\ \hline
            $a_i(t)$ & Arrival task bits of the $i$-th application type at time $t$ &bits\\\hline
            $b_i(t)$ & Departure task bits of the $i$-th application type at time $t$ &bits\\\hline
            $o_i(t)$ & Offloaded task bits of the $i$-th application type at time $t$ &bits\\\hline
            $w_i$&  Workload for the $i$-th application type &cycles/bit\\\hline
            $q_i(t)$ & \begin{tabular}[c]{@{}c@{}} Queue length at time $t$ for the $i$-th queue\\
             at the edge node \end{tabular} & bits\\\hline
            
            $\alpha_i(t)$ &
            \begin{tabular}[c]{@{}c@{}} Edge CPU resource allocation factor for \\
            the $i$-th queue at time $t$\end{tabular}& - \\\hline
            $\beta_i(t)$ & 
            \begin{tabular}[c]{@{}c@{}} Communication bandwidth allocation factor for the $i$-th queue\\
            from the edge  node to the cloud  node  at time $t$\end{tabular}& -\\\hline
            $C_E(t)$ & Cost for computing  at the edge  node at time $t$ & \begin{tabular}[c]{@{}l@{}}Watt \\
            \end{tabular}\\\hline
            $C_{E,j}(t)$ & 
             \begin{tabular}[c]{@{}c@{}} Cost for computing  at the $j$-th CPU core \\
            at the  edge  node at time $t$\end{tabular} & \begin{tabular}[c]{@{}c@{}}Watt \\
            \end{tabular}\\\hline
            $C_C(t)$ & Cost for offloading to the cloud  node at time $t$ & \begin{tabular}[c]{@{}c@{}} Watt \\
            \end{tabular}\\\hline
        \end{tabular}
        }
    \end{table}

\section{Problem Statement and Conventional Approach}\label{sec:probstate}

In this section, based on the derivation in Section \ref{sec:SystemModel} we formulate the problem of optimal resource allocation at the edge  node under queue stability. Since the arrival process is random, the optimization cost is random. Hence, we consider the minimization of the time-averaged expected cost while maintaining  queue stability for stable system operation.  The considered optimization problem is formulated as follows:

\begin{prob}\label{prob:original_prob}
\begin{align}
         &\displaystyle\min_{\pmb{\alpha}(t), \pmb{\beta}(t)} \lim_{T\rightarrow \infty}\frac{1}{T}\sum_{t=0}^{T-1}\mathbb{E}[C_E(t)+C_C(t)]  \label{eq:prob1Cost}\\
         \mbox{s.t.}\qquad&\limsup_{t\rightarrow\infty}\frac{1}{t}\sum_{\tau=0}^{t-1}\mathbb{E}[q_i(\tau)]<\infty~\text{ for all }i\label{eq:strongstable}\\
         &\displaystyle\sum_{i=1}^N\alpha_i(t)\le1\text{ and }\sum_{i=1}^N\beta_i(t)\le1\text{ for all }t,
\end{align}
where  $\pmb{\alpha}(t)$ and $\pmb{\beta}(t)$ are defined in  \eqref{eq:vecalpha} and \eqref{eq:vecbeta}, respectively, $C_E(t)$ and $C_C(t)$ are the cost functions defined in \eqref{eq:costEdget11} and \eqref{eq:CostCloudt11}, respectively, and $q_i(t)$ is the length of the $i$-th queue at time $t$.
\end{prob}

\noindent Note that (\ref{eq:strongstable}) implies that we require all the queues in the system are strongly stable as a constraint for optimization. 
It is not easy to solve Problem \ref{prob:original_prob} directly. 
A conventional approach to Problem \ref{prob:original_prob} is based on the Lyapunov optimization \cite{neely2010}.  The Lyapunov optimization defines the quadratic Lyapunov function and the Lyapunov drift  as follows \cite{neely2010}:
     \begin{align}
        L(t)&=\frac{1}{2}\sum_{i=1}^{N} q_i(t)^2 \label{eq:LyFunc}\\
        \Delta L(t)&=L(t+1)-L(t). \label{eq:LyDrift}
     \end{align}
 It is known that Problem 1 is feasible when the average service rate (i.e., average departure rate) is strictly larger than the average arrival rate \cite{neely2010}, i.e., $
 \lambda_i\mu_i - \frac{\bar{\alpha}_i f_E}{w_i}-\bar{\beta}_i B <-\epsilon$  
    for some $\epsilon>0$ and some constants $\bar{\alpha}_i$ and $\bar{\beta}_i$, $i=1,\cdots,N$. Here, $\mu_i$ is the average task packet size at each arrival at the $i$-th application queue.
When  stable control is feasible, we want to determine 
the instantaneous service rates $\alpha_i(t)$ and $\beta_i(t)$ at each time for cost-efficient stable control of the system.  A widely-considered conventional method to determine the instantaneous service rates for Problem 1 is the DPP algorithm. The DPP algorithm minimizes the DPP   instead of the penalty (i.e., cost) alone, given by \cite{neely2010}
    \begin{equation} \label{eq:dppexpression}    
        \Delta L(t)+V[C_E(t)+C_C(t)]
\end{equation}
for a positive weighting factor $V$ which determines the trade-off between the drift and the penalty. 
In  \eqref{eq:dppexpression}, the original queue stability constraint is absorbed  as the drift term in an implicit manner.     
The DPP in our case is expressed  as 
        \begin{align}
            \Delta L(t)&+V [C_E(t)+C_C(t)] \nonumber\\
            \le& ~\frac{1}{2} \sum_{i=1}^N\left(a_i(t)-\frac{\alpha_i(t)f_E}{w_i}-\beta_i(t)B\right)^2\nonumber\\
            &+\sum_{i=1}^Nq_i(t)\left(a_i(t)-\frac{\alpha_i(t)f_E}{w_i}-\beta_i(t)B\right)\nonumber\\
                       &+V\left[C_E(\pmb{\alpha}(t))+C_C(\pmb{\alpha}(t),\pmb{\beta}(t))\right],
            \label{eq:DpPupperBound}
        \end{align}
where the inequality is due to ignoring the operation $[x]^+$ in \eqref{eq:qdynamics1}. 
The basic DPP algorithm minimizes the DPP expression \eqref{eq:DpPupperBound} in a greedy manner, which is summarized in
Algorithm  \ref{alg:mdpp}\cite{neely2010}.
\begin{algorithm}[t] 
		\caption{Basic  Drift-Plus-Penalty Algorithm \cite{neely2010}}
		\begin{algorithmic}[1]
			\label{alg:mdpp}
			\STATE \textbf{Initialization:} \\
				Set  $t = 1$.
			\STATE \textbf{Repeat:} \\
				\begin{enumerate}
				\item[1)] Observe $\bm a(t)=[a_1(t),\cdots,a_N(t)]$ and $\bm q(t)=[q_1(t),\cdots,q_N(t)]$.
				\item[2)] Choose actions $\pmb{\alpha}(t)$ and $\pmb{\beta}(t)$ to minimize \eqref{eq:DpPupperBound}.
				\item[3)] Update $\bm q (t)$ according to (\ref{eq:qdynamics1}) and $t \leftarrow t+1$.
				\end{enumerate}
		\end{algorithmic}
\end{algorithm}
It is known that under some conditions this simple greedy DPP algorithm  yields a solution that satisfies strong stability for all queues and its resultant time-averaged penalty is within some constant bound from the optimal value of Problem 1 \cite{neely2010}.
Note that there exist two terms generated from $\Delta L(t)$ in the RHS of  \eqref{eq:DpPupperBound}: one is the square of the difference between the arrival and departure rates and the other is the product of the queue length and the difference between the arrival and departure rates. In many cases, the quadratic term in the RHS of  \eqref{eq:DpPupperBound} is replaced by a constant upper bound based on certain assumptions on the arrival and departure rates $a_i(t)$ and $b_i(t)$, and only the second term $q_i(t)(a_i(t)-b_i(t))$ is considered   \cite{neely2010}.

\section{The Proposed Reinforcement Learning-Based Approach}\label{sec:rlbased}

Although Problem \ref{prob:original_prob} can be approached  by the conventional DPP algorithm. The DPP algorithm has several disadvantages: It is an instantaneous greedy optimization and requires solving an optimization problem at each time step, and numerical optimization with a complicated penalty function can be difficult.  As an alternative, in this section, we consider a  DRL-based approach to Problem \ref{prob:original_prob}, which exploits the trajectory of system evolution and does not require any optimization in the execution phase once the control policy is trained.

A basic RL is an MDP composed of a state space $\mathcal{S}$, an action space $\mathcal{A}$, a state transition probability $P: \mathcal{S}\times \mathcal{A} \rightarrow \mathcal{S}$, a reward function $r: \mathcal{S} \times \mathcal{A} \rightarrow \mathbb{R}$, and a policy $\pi: \mathcal{S} \rightarrow \mathcal{A}$ \cite{sutton1998}. At time $t$,  the agent which has policy $\pi$ observes a state $s_t \in \mathcal{S}$ of the environment and performs an action $\mathrm{a}_t$   according to the policy, i.e., $\mathrm{a}_t \in \mathcal{A} \sim \pi(\mathrm{a}_t|s_t)$. Then, a reward $r_t$,  depending on the current state  $s_t$ and the agent's action $\mathrm{a}_t$, is given to the agent according to the reward function $r_t
=r(s_t,\mathrm{a}_t)$, and the state of the environment changes to a next state $s_{t+1}$ according to the state transition probability, i.e., $s_{t+1} \sim P(s_{t+1}|s_t,\mathrm{a}_t)$.   The goal is to learn a policy to  maximize the accumulated expected return.

Problem \ref{prob:original_prob} can be formulated into a RL problem based on the queue dynamics  \eqref{eq:qdynamics1} and the additive cost function \eqref{eq:prob1Cost}, in which  the agent is the resource allocator of the edge  node and tries to learn an optimal policy for resource allocation at the edge  node while stabilizing the queues.  Since we do not assume the knowledge of the state transition probability $P$, our approach belongs to model-free RL \cite{sutton1998}.  The main challenge in the RL formulation of Problem \ref{prob:original_prob} is how to enforce the queue stability constraint in \eqref{eq:strongstable} 
into the RL formulation, and  
the success and effectiveness of the devised RL-based approach depends critically on a well-designed reward function as well as good formulation of the state and action spaces.

\subsection{State and Action Spaces}\label{sec:states}

 In order to define the state and action spaces, we clarify the operation in time domain. Fig. \ref{fig:RLstate_trans} describes our timing diagram for RL operation. For causality under our definition of the state and the action, we assume one time step delay for overall operation.
 Recall that  time is normalized in this paper. Hence, the discrete time index $t$ used in the queue dynamics in Section \ref{sec:QueueDynamics} can also mean the continuous time instant $t$.    As seen in Fig. \ref{fig:RLstate_trans}, considering the actual timing relationship, we define the quantities as follows. $q_i(t)$ is the length of the $i$-th queue at the continuous time instant $t$, $a_i(t)$ is the sum of the arrived task bits for the continuous time interval $[t,t+1)$, and $b_i(t)$ is the serviced task bits during the continuous time interval $[t+1,t+2)$ based on the observation of $q_i(t)+a_i(t)$. The one step delayed service $b_i(t)$ is incorporated in computing the queue length $q_i(t+1)$ at the time instant $t+1$ due to the assumption of one step delayed operation for causality. 
Then, the state variables that we determine at the RL discrete time index $t$ for the considered edge system are as follows.

\begin{enumerate}

    \item The main state variable is the queue length  including the arrival $a_i(t)$: $q_i(t)+a_i(t)$,  $i=1,\cdots,N$.
    
    \item Additionally, we include  the queue length   at time $t$  before the arrival $a_i(t)$, i.e., $q_i(t)$ or 
the arrival $a_i(t)$ itself in the set of state variables.
    
    \item The workload for each application type: $w_i$, $i=1,\cdots,N$.
    
    \item The actual CPU use factor\footnote{The nominal use of the edge-node CPU for the $i$-th queue at time $t$ is $\alpha_i(t)f_E$. However, when the amount of task data bits in the $i$-th queue is less than the value $\alpha_i(t)f_E/w_i$, the actual CPU use for the $i$-th queue by action at time $t$  denoted as $\tilde{\alpha}_i(t)f_E$
   is less than $\alpha_i(t)f_E$. In the queue dynamics  \eqref{eq:qdynamics1},
this effect is handled by  the function $[x]^+=\max(0,x)$.} for the $i$-th queue  at time $t-1$: $\tilde{\alpha}_i(t)$, $i=1,\cdots,N$.

    \item The required CPU cycles at the cloud node for the offloaded tasks at time $t$: $\sum_{i=1}^N w_i o_i(t)$.

    \item The time average of $a_i(t)$ for the most recent 100 time slots: $\frac{1}{100}\sum_{\tau=t-99}^{t} a_i(\tau)$, $i=1,\cdots,N$.
    
 \end{enumerate}

The action variables of the policy are $\alpha_i(t)$ and $\beta_i(t)$, $i=1,2,\cdots,N$, and the constraints on the actions are $\sum_{i=1}^N \alpha_i(t)\le 1$ and $\sum_{i=1}^N \beta_i(t) \le 1$.


\begin{figure}[!t]
\centering
    \includegraphics[scale=0.45]{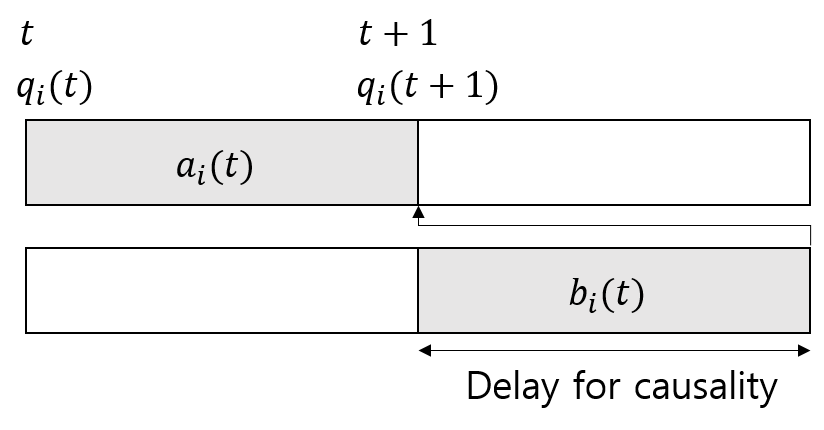}
\caption{Timing diagram}
        \label{fig:RLstate_trans}
\end{figure}

\begin{remark} \label{remark:Timetdef} We can view the last moment of the continuous time interval $[t,t+1)$ as the reference time for the RL discrete time index $t$. 
Note that $q_i(t)+a_i(t)$ is set as the main state variable at RL discrete time index $t$, and the service (i.e., action) $b_i(t)$ is based on  $q_i(t)+a_i(t)$. 
In this case, from the RL update viewpoint, the transition is in the form of 
\begin{equation}
\begin{array}{ll}
\text{Current state} & s_t =\{q_i(t)+a_i(t),\cdots\}  \\
\text{Current action} &  \mathrm{a}_t =(\pmb{\alpha}(t),\pmb{\beta}(t)) ~\Rightarrow~ \text{determines}~b_i(t)   \ \\
\text{Current reward} & r_t(s_t,\mathrm{a}_t) ~\text{as a function of}~ (s_t,\mathrm{a}_t)  \\
\text{Queue  update} & q_i(t+1) = [q_i(t) + a_i(t) - b_i(t)]^+  \\
\text{Next state}  &s_{t+1}=\{q_i(t+1)+a_i(t+1),\cdots\}.
\end{array}
\end{equation}
This definition of the state and timing  is crucial for proper RL training and operation. The reason why we define the reference time and the state in this way will be explained in Section  \ref{subsubsec:RDQSPM}.
\end{remark}

\subsection{Reward Design for Queue Stability and Penalty Minimization}
\label{subsubsec:RDQSPM}

Suppose that we simply use the negative of the DPP expression as the reward function for RL, i.e, 
\begin{equation}
        r_t^{guess} = -[ \Delta L(t) + V (C_E(t)+C_C(t))  ],
\end{equation}
where the drift term  $\Delta L(t)$ is given by
    \begin{align}  \label{eq:DelLtINReDesign}
     &\Delta L(t)=\frac{1}{2}\sum_{i=1}^N \left[ q_i(t+1)^2 - q_i(t)^2  \right] 
     \end{align}
and try to maximize the  accumulated sum $\sum_t r_t^{guess}$     by RL. Then, does RL with this reward function lead  to the behavior that we want?  In the following, we derive a proper reward function for RL to solve Problem \ref{prob:original_prob} and answer the above question in a progressive manner, which is the main contribution of this paper.

The main point for an RL-based approach to Problem  \ref{prob:original_prob} to yield queue stability is to exploit the fact  
that the goal of RL is to maximize the accumulated reward (not to perform a greedy optimization at each time step) and hence achieving  the intended behavior is through a well-designed reward function. In order to design such a reward function for RL to yield queue stability, we start with the 
following result:

\begin{thm} \label{thm:rewardCond}
Suppose that $q_i(0)=0, \forall i$ (we will assume initial zero backlog for all queues in the rest of this paper) and  the reward $r_t$ at time $t$ for RL satisfies the following condition: \begin{equation} \label{eq:RtDesignCond1}
        r_t \le U - \eta \sum_{i=1}^N q_i(t+1)
\end{equation}
for some finite constant $U$  and finite positive $\eta$. Then, RL trained with such $r_t$ tries to strongly stabilize the queues. Furthermore, if the following condition is satisfied
\begin{equation} \label{eq:RtDesignCond2}
    r_{min} \le r_t, ~~~\forall t
    \end{equation}
     for some finite $r_{min}$ in addition to 
\eqref{eq:RtDesignCond1}.
Then, the resulting queues by RL with such $r_t$ are strongly stable.
\end{thm}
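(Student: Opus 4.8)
The plan is to connect the RL objective --- the time-average accumulated reward, which is what RL maximizes --- directly to the time-average expected backlog appearing in Definition \ref{def:ss}, using only the bounds \eqref{eq:RtDesignCond1}--\eqref{eq:RtDesignCond2} on $r_t$. Notably, no Lyapunov-drift telescoping is needed in this argument: the Lyapunov function enters only later, when the reward is actually constructed, whereas condition \eqref{eq:RtDesignCond1} already states the bound directly in terms of $q_i(t+1)$. I would carry out the argument in the undiscounted / average-reward formulation, which matches the time-average cost of Problem \ref{prob:original_prob}; the passage to discounted rewards used in the implementation is deferred, as the paper notes.

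First I would sum \eqref{eq:RtDesignCond1} over $t=0,\dots,T-1$, take expectations under an arbitrary policy, and divide by $T$, obtaining
\[
\frac{1}{T}\,\mathbb{E}\!\left[\sum_{t=0}^{T-1} r_t\right] \;\le\; U \;-\; \frac{\eta}{T}\sum_{t=0}^{T-1}\sum_{i=1}^N \mathbb{E}[q_i(t+1)].
\]
Since $q_i(t)\ge 0$ this already shows that the average reward is bounded above by $U$ for every policy; rearranging gives
\[
\frac{\eta}{T}\sum_{t=0}^{T-1}\sum_{i=1}^N \mathbb{E}[q_i(t+1)] \;\le\; U \;-\; \frac{1}{T}\,\mathbb{E}\!\left[\sum_{t=0}^{T-1} r_t\right].
\]
The left-hand side is, up to the factor $\eta$ and an index shift, exactly the quantity whose finiteness defines strong stability, and the right-hand side is $U$ minus the very objective RL maximizes. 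This is the precise sense of the first claim: a policy that makes the time-average accumulated reward large is thereby forced to make the time-average total backlog small, because the latter is upper-bounded by $U$ minus the former; hence RL ``tries to'' strongly stabilize the queues. (One does not get an unconditional backlog bound here, because without a lower bound on $r_t$ the right-hand side is not controlled a priori.)

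For the second, rigorous part I would simply add the lower bound \eqref{eq:RtDesignCond2}: $\frac{1}{T}\mathbb{E}[\sum_{t=0}^{T-1} r_t]\ge r_{min}$ holds for \emph{every} policy, so the inequality above becomes
\[
\frac{1}{T}\sum_{t=0}^{T-1}\sum_{i=1}^N \mathbb{E}[q_i(t+1)] \;\le\; \frac{U - r_{min}}{\eta}
\]
for all $T$. Because $q_i(0)=0$, for each $i$ one has $\sum_{\tau=0}^{T-1} q_i(\tau)=\sum_{\tau=1}^{T-1}q_i(\tau)\le\sum_{\tau=1}^{T}q_i(\tau)=\sum_{t=0}^{T-1}q_i(t+1)$, and since every term of the outer sum over $i$ is nonnegative, each queue obeys $\frac{1}{T}\sum_{\tau=0}^{T-1}\mathbb{E}[q_i(\tau)]\le (U-r_{min})/\eta$ for all $T$; taking $\limsup_{T\to\infty}$ yields strong stability of $q_i$ for every $i$, in particular under the trained RL policy.

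The argument is short, and I do not expect a genuine technical obstacle inside it; the points requiring care are (i) taking the time average and then the $\limsup$/$\liminf$ in the right order so that the one-sided bound yields the ``tries to'' statement while the two-sided bound yields actual stability, and (ii) noticing that once both \eqref{eq:RtDesignCond1} and \eqref{eq:RtDesignCond2} hold the conclusion is policy-independent, so no existence or optimality of an RL-optimal policy needs to be assumed. The real difficulty lies upstream rather than here: exhibiting a concrete reward $r_t$ --- built from the Lyapunov drift $\Delta L(t)$ --- that simultaneously satisfies \eqref{eq:RtDesignCond1} and \eqref{eq:RtDesignCond2} while still encoding the penalty $C_E(t)+C_C(t)$ to be minimized, which is what the rest of Section \ref{sec:rlbased} is devoted to.
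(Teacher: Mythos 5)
Your proposal is correct and follows essentially the same route as the paper's proof: sum the pointwise bound \eqref{eq:RtDesignCond1}, take expectations, divide by the horizon, rearrange to bound the time-average total backlog by $U/\eta$ minus the (scaled) average reward, interpret this as the ``tries to stabilize'' claim, and then invoke the lower bound \eqref{eq:RtDesignCond2} to get the uniform bound $(U-r_{min})/\eta$ and hence strong stability of each queue. The only differences are cosmetic (order of summation and expectation, and your explicit $\limsup$ and per-queue remarks, which the paper handles with the $\frac{t+1}{t}$ factor and the observation that a bounded sum of nonnegative queue lengths bounds each queue).
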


\begin{proof}
   Taking expectation on both sides of 
   \eqref{eq:RtDesignCond1}, we have
    \begin{equation}  \label{eq:RtCondProof1}
        \mathbb{E}[ r_\tau] \le U - \eta \sum_{i=1}^N \mathbb{E}[q_i(\tau+1)].
    \end{equation}
   Summing \eqref{eq:RtCondProof1} over $\tau=0,1,\cdots, t-1$, we have
        \begin{equation} \label{eq:RtCondProof2}
        \sum_{\tau=0}^{t-1} \mathbb{E}[ r_\tau] \le Ut - \eta \sum_{\tau=0}^{t-1}\sum_{i=1}^N \mathbb{E}[q_i(\tau+1)].
        \end{equation}
Rearranging \eqref{eq:RtCondProof2} and dividing by $\eta t$, we have
        \begin{equation}  \label{eq:RtCondProof3}
        \frac{t+1}{t}\frac{1}{t+1} \sum_{\tau=0}^{t}\sum_{i=1}^N \mathbb{E}[q_i(\tau)] \le \frac{U}{\eta} - \frac{1}{\eta}\frac{1}{t} \sum_{\tau=0}^{t-1} \mathbb{E}[ r_\tau],
        \end{equation}
where $\frac{t+1}{t} \rightarrow 1$ as $t$ increases, and we used $q_i(0)=0$. Note that the left-hand side (LHS) of \eqref{eq:RtCondProof3} is the time-average of the expected sum queue length.
Since the goal of RL is to maximize the accumulated expected reward $\sum_{\tau=0}^{t-1} \mathbb{E}[ r_\tau]$, RL with $r_t$ satisfying \eqref{eq:RtDesignCond1}
tries to stabilize the queues by making the average queue length small. That is, for the same $t$, when $\sum_{\tau=0}^{t-1}\mathbb{E}[r_\tau]$ is larger, the average queue length becomes smaller.

Furthermore, if the condition \eqref{eq:RtDesignCond2} is satisfied in addition, the second term in the RHS of  \eqref{eq:RtCondProof3} is upper bounded as 
$-\frac{1}{t} \sum_{\tau=0}^{t-1} \mathbb{E}[ r_\tau] \le -r_{min}$. Hence, from \eqref{eq:RtCondProof3}, we have
        \[
        \frac{t+1}{t}\frac{1}{t+1} \sum_{\tau=0}^{t}\sum_{i=1}^N \mathbb{E}[q_i(\tau)] \le \frac{U}{\eta} - \frac{r_{min}}{\eta}= \frac{1}{\eta}(U-r_{min}).
        \]
If the sum of the queue lengths is bounded, the length of each queue is bounded. Therefore, in this case,  the queues are strongly stable by Definition \ref{def:ss}. (Note that $r_{min} < U$ from \eqref{eq:RtDesignCond1}.)
\end{proof}

\noindent  Note that from the definition of strong stability in Definition \ref{def:ss} and the fact that RL tries to maximize the accumulated expected reward, the condition  \eqref{eq:RtDesignCond1}  and resultant \eqref{eq:RtCondProof3} can be considered as a natural starting point for reward function design.

With the guidance by  Theorem \ref{thm:rewardCond}, we design a  reward function for RL to learn a policy to simultaneously decrease the average queue length and the penalty, i.e., the resource cost. For this we set the RL reward function as the sum of two terms:
$r_t = r_t^Q + r_t^P$,
where $r_t^Q$ is the queue-stabilizing part and $r_t^P$ is the penalty part given by $r_t^P = - V[C_E(t)+C_C(t)]$ with a weighting factor $V$. Based on  Theorem  \ref{thm:rewardCond}, 
we  consider the following class of functions as a candidate for the queue-stabilizing part $r_t^Q$: 
      \begin{equation}\label{eq:queueRewards}
        r_t^{Q} = - \rho \sum_{i=1}^{N} [q_i(t+1)]^\nu,~~~\nu \ge 1
    \end{equation} 
with some positive constant $\rho$. 
Then, the total reward at time $t$ for RL is given by
   \begin{equation}  \label{eq:ProposedFinalReward}
       r_t = r_t^Q + V r_t^P = -\rho \sum_{i=1}^N [q_i(t+1)]^\nu - V [C_E(t) + C_C(t)].
   \end{equation}
The property of the reward function \eqref{eq:ProposedFinalReward} is provided in the following theorem:
        
\begin{thm}\label{thm:RewardVerification}
The queue-stabilizing part $r_t^Q$ of the reward function \eqref{eq:ProposedFinalReward} makes RL with the reward \eqref{eq:ProposedFinalReward} try to strongly stabilize the queues.
\end{thm}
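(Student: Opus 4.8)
The plan is to obtain Theorem~\ref{thm:RewardVerification} as a direct corollary of Theorem~\ref{thm:rewardCond}: it suffices to show that the reward function \eqref{eq:ProposedFinalReward} satisfies the upper-bound condition \eqref{eq:RtDesignCond1}, i.e., $r_t \le U - \eta \sum_{i=1}^N q_i(t+1)$ for some finite constant $U$ and some finite $\eta > 0$. Once this is verified, the first conclusion of Theorem~\ref{thm:rewardCond} (RL tries to strongly stabilize the queues, equivalently \eqref{eq:RtCondProof3} holds) follows immediately.

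First I would dispose of the penalty term. Since $C_E(t)$ and $C_C(t)$ are power/monetary costs they are nonnegative, and $V>0$, so $-V[C_E(t)+C_C(t)] \le 0$; hence $r_t \le r_t^Q = -\rho \sum_{i=1}^N [q_i(t+1)]^\nu$. (One could additionally note that $C_E,C_C$ are bounded above because $\pmb\alpha(t),\pmb\beta(t)$ satisfy \eqref{eq:policy_constraint}, but this is not needed for the upper bound.) Next I would pass from the $\nu$-th power to a linear lower bound on each queue term using the elementary inequality $x^\nu \ge x - 1$, valid for every $x \ge 0$ and every $\nu \ge 1$: if $x \le 1$ then $x - 1 \le 0 \le x^\nu$, and if $x \ge 1$ then $x^\nu \ge x \ge x - 1$. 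Applying this with $x = q_i(t+1) \ge 0$ and summing over $i$ gives $\sum_{i=1}^N [q_i(t+1)]^\nu \ge \sum_{i=1}^N q_i(t+1) - N$, whence
\[
r_t \;\le\; -\rho \sum_{i=1}^N [q_i(t+1)]^\nu \;\le\; \rho N - \rho \sum_{i=1}^N q_i(t+1).
\]
This is precisely \eqref{eq:RtDesignCond1} with $U = \rho N < \infty$ and $\eta = \rho > 0$ (using that $\rho > 0$, $\nu \ge 1$ are the standing assumptions from \eqref{eq:queueRewards} and that $N$ is finite), so Theorem~\ref{thm:rewardCond} gives the claim.

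There is essentially no analytic obstacle; the only point I would be careful to flag explicitly is why the conclusion is merely that RL \emph{tries to} stabilize the queues rather than that the queues are provably strongly stable. The stronger conclusion of Theorem~\ref{thm:rewardCond} additionally requires a uniform lower bound $r_{min} \le r_t$ as in \eqref{eq:RtDesignCond2}, but $r_t^Q = -\rho \sum_{i=1}^N [q_i(t+1)]^\nu \to -\infty$ as the backlogs grow, so no finite $r_{min}$ exists for this reward class and the second part of Theorem~\ref{thm:rewardCond} cannot be invoked here. This is exactly the motivation, raised in the discussion following the theorem, for subsequently replacing the instantaneous form \eqref{eq:queueRewards} by a bounded one-step-difference reward that is compatible with discounted practical RL.
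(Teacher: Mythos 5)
Your proof is correct and follows essentially the same route as the paper's: drop the nonpositive penalty term $-V[C_E(t)+C_C(t)]$, apply the elementary inequality $x^\nu \ge x-1$ to linearize the queue term, and invoke Theorem~\ref{thm:rewardCond} with $U=\rho N$ and $\eta=\rho$. Your added observations---that the upper bounds on $C_E,C_C$ are not needed for this direction, and that the missing lower bound \eqref{eq:RtDesignCond2} is why only the weaker ``tries to stabilize'' conclusion is available---are both accurate and match the paper's subsequent remark.
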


\begin{proof}  
 Note from Section 
    \ref{subsec:PowerConsumptionModel} that     $C_E(t)=\sum_{j=1}^{N_E}C_{E,j}(t)$ with $C_{E,j}=\kappa f_{E,j}^3$ and $C_C(t) = C_C(\sum_{i=1}^N w_i o_i(t))$, where the $j$-th edge CPU core clock frequency $f_{E,j}$ and the offloading $o_i(t)$ from the $i$-th queue to the cloud node are given by \eqref{eq:f_core} and \eqref{eq:aNpi}, respectively. We have $C_E(t) \ge 0 $ and $C_{C}(t) \ge 0$ by design.
    Furthermore, we have
    \begin{equation}
    f_{E,j} \le \frac{f_E}{N_E}, ~\forall j~~\mbox{and}~~
    o_i(t) \le B, ~\forall i
    \end{equation}
    by considering the full computational and communication resources. Hence, we have
    \begin{equation}
        C_E(t) \le \kappa \frac{f_E^3}{N_E^2} ~~\mbox{and}~~ C_C(t) \le C_C\left(B \sum_{i=1}^N w_i\right),
    \end{equation}
    with slight abuse of the notation $C_C$ as a function of the offloaded task bits in the RHS of the second inequality.     Therefore, we have
    \begin{equation}  \label{eq:CeCeUB}
     -\kappa \frac{f_E^3}{N_E^2} -C_C\left(B \sum_{i=1}^N w_i\right) \le    -[C_E(t)+C_{C}(t)] \le 0. 
    \end{equation}
    Now we can upper bound  $r_t$ as follows:
    \begin{align}
        r_t &=  -\rho \sum_{i=1}^N [q_i(t+1)]^\nu - V [C_E(t) + C_{C}(t)]   \\
            &\stackrel{(a)}{\le}  -\rho \sum_{i=1}^N [q_i(t+1)]^\nu,  \label{eq:theo2UB2}  \\
            &\stackrel{(b)}{\le}  -\rho \sum_{i=1}^N q_i(t+1)+\rho N,  \label{eq:proofTheo2_35}
    \end{align}
where Step (a) is valid due to $-[C_E(t)+C_{C}(t)]\le 0$ and Step (b) is valid due to the inequality $x^\nu \ge x -1$
for $\nu \ge 1$.  Then, by setting $U=\rho N$ and $\eta=\rho$, we can apply the result of Theorem  \ref{thm:rewardCond}.
\end{proof}

\begin{remark}
Note that in the proof of Theorem \ref{thm:RewardVerification},  the upper bound \eqref{eq:theo2UB2} becomes tight when $V$ is zero.  Thus, the queue-stabilizing part of the reward dominantly operates when $V$ is small so that $-V[C_E(t)+C_C(t)]\approx 0$. In case of large $V$, a small reduction in $C_E(t)+C_C(t)$ yields a large positive gain in $-V[C_E(t)+C_C(t)]$ and thus $r_t^P$ becomes dominant. Even in this case,  the queue-stabilizing part $r_t^Q$ itself still operates towards the direction of queue length reduction due to the negative sign in front of the queue length term. This is what Theorem  \ref{thm:RewardVerification} means.  However, in case of large $V$, more reward can be obtained by saving $C_E(t)+C_C(t)$ while increasing $q_i(t+1)$, and  
 the reward  \eqref{eq:ProposedFinalReward} does not guarantee strong queue stability since it  is not lower bounded as  \eqref{eq:RtDesignCond2} due to the structure of $-\rho\sum_i [q_i(t+1)]^\nu$. Hence, a balanced $V$ is required for 
simultaneous queue stability and penalty reduction. 
\end{remark}

Now let us investigate the reward function \eqref{eq:ProposedFinalReward} further. First, note that the penalty part $r_t^P=-V[C_E(t)+C_C(t)]$ is a deterministic function of action $\pmb{\alpha}_i(t)$ and $\pmb{\beta}_i(t)$.  Second, consider the term $q_i(t+1)$ in $r_t^Q$ in detail. $q_i(t+1)$ is decomposed as 
    \begin{eqnarray}\label{eq:qab}
    q_i(t+1) 
         &=&\underbrace{q_i(0)+\sum_{\tau=0}^{t-1}[a_i(\tau)-b_i(\tau)]}_{=q_i(t)}+ a_i(t) - b_i(t)   \nonumber\\
          &=&  \underbrace{q_i(t)+a_i(t)}_{\text{state at time}~t}-\underbrace{b_i(t)}_{\text{action at time}~t}  \label{eq:qitp1decomp}
    \end{eqnarray}
under the assumption    of  $q_i(t)+a_i(t)\ge b_i(t)$  for simplicity. 
Note that $b_i(\tau)$ is a deterministic  function of the action $\pmb{\alpha}(\tau)$ and $\pmb{\beta}(\tau)$ for $\tau=0,1,\cdots,t$ but the arrivals $a_i(\tau)$, $\tau=0,1,\cdots,t$ are random quantities  uncontrollable by the policy $\pi$.  
Recall that the reward function in RL is a function of state and action in general. In the field of RL, it is known that an environment with probabilistic reward is more difficult to learn than an environment with deterministic reward \cite{wang2020reinforcement}.  That is, for a given state, the agent performs an action and receives a reward depending on the state and the action. When the received reward is probabilistic especially with large variance, it is difficult for the agent to know whether the action is good or bad for the given state.  Now, it is clear why we defined $q_{i}(t)+a_i(t)$ as a state variable at time $t$, as mentioned in Remark \ref{remark:Timetdef}, and defined the timing structure as defined in Section \ref{sec:states}.
By defining $q_{i}(t)+a_i(t)$ as a state variable, the reward-determining quantity $q_i(t+1)$  becomes a deterministic function of the state and the action as seen in \eqref{eq:qitp1decomp}, and the random arrivals $a_i(\tau)$, $\tau=0,1,\cdots, t$ are absorbed in the state. In this case, the randomness caused by $a_i(t+1)$ is in the state transition:
\begin{align}
    s_t &= \{q_i(t)+a_i(t), \cdots \}  \nonumber\\
    s_{t+1} &= \{\underbrace{q_i(t) + a_i(t)}_{\in~ s_i(t)}- \underbrace{b_i(t)}_{\text{action}} + 
    \underbrace{a_i(t+1)}_{\text{random term}},\cdots\}.
\end{align}
That is, the next state follows $s_{t+1} \sim P(s_{t+1}|s_t,\mathrm{a}_t)$ and the distribution of the random arrival $a_i(t+1)$ affects the state transition probability $P$. Note that in this case the transition is Markovian  since the arrival $a_i(t+1)$  is independent of the arrivals at other time slots.  Thus, the whole set up falls into an MDP with a deterministic reward function.  However,  if we had defined the state at time $t$ as $q_i(t)$ instead of $q_i(t)+a_i(t)$ (this setup does not require one time step delay for causality), then $q_i(t+1)$ would have been decomposed as 
\begin{equation}
    q_i(t+1) 
          =  \underbrace{q_i(t)}_{\text{state at time $t$}} + \underbrace{a_i(t)}_{\text{random term}}  -\underbrace{b_i(t)}_{\text{action at time $t$}}
\end{equation}
to yield a random probabilistic reward, and this would have made learning difficult.

Although RL with  the reward function $r_t^Q=-\rho\sum_{i=1}^N [q_i(t+1)]^\nu$ with $\nu\ge 1$ added to the penalty part  $r_t^P$ tries to strongly stabilize  the queues by Theorem \ref{thm:RewardVerification} through the relationship  \eqref{eq:RtCondProof3}, we want to reshape the queue-stabilizing part  $r_t^Q$ of the reward into a {\em discounted form} to be suited to practical RL, while maintaining the reward sum equivalence needed for queue length control by RL  through the relationship \eqref{eq:RtCondProof3}. 
Our reward reshaping is based on the fact that  training in RL is typically based on episodes, which is assumed here too. Let $T$ be the length of each episode. Then,  under the assumption of $q_i(0)=0$,   we can express the accumulated reward  over one episode as 
\begin{align}
\frac{1}{T}\sum_{t=0}^{T-1} r_{t}^{Q} &=-\rho \frac{1}{T}\sum_{t=0}^{T-1} \sum_{i=1}^N[q_i(t+1)]^\nu  \label{eq:reshape39}\\
&\stackrel{(a)}{=} -\rho \sum_{t=0}^{T-1}\sum_{i=1}^N \frac{T-t}{T} [q_i(t+1)^\nu - q_i(t)^\nu]\label{eq:RewReshapDiff}
\end{align}
where the equality $(a)$   is valid because the coefficient in front of the term $[q_i(\ell)]^\nu$ for each $\ell$ in \eqref{eq:RewReshapDiff}  is given by  
\begin{equation}  \label{eq:discountEqual}
\frac{T-(\ell-1)}{T} - \frac{T-\ell}{T}=\frac{1}{T}.
\end{equation}
Thus, by defining
\begin{equation}  \label{eq:tildertq}
    \tilde{r}_t^Q = - \rho  \sum_{i=1}^N \frac{T-t}{T} [q_i(t+1)^\nu - q_i(t)^\nu], ~~~\nu \ge 1,\\
\end{equation}
we have the sum equivalence between the original reward $r_t^Q$ in \eqref{eq:queueRewards} and the reshaped reward $\tilde{r}_t^Q$ except the factor $1/T$, as seen in  \eqref{eq:RewReshapDiff}. 
Since $r_t^Q$ satisfies $r_t^Q \le U - \eta \sum_{i=1}^N q_i(t+1)$ as seen in the proof of Theorem 2 and $\tilde{r}_t$ satisfies  $\sum_{\tau=0}^{t-1} \tilde{r}_\tau^Q =\frac{1}{t} \sum_{\tau=0}^{t-1}r_\tau^Q$ due to \eqref{eq:RewReshapDiff},  by summing the first condition over time $0,1,\cdots,t-1$ and using the second condition, we have  
\begin{equation}  \label{eq:controlReshaped}
\sum_{\tau=0}^{t-1} \tilde{r}_\tau=\frac{1}{t}\sum_{\tau=0}^{t-1} r_\tau \le U - \eta \frac{1}{t} \sum_{\tau=0}^{t-1} \sum_{i=1}^N q_i(t+1).
\end{equation}
Rearranging the terms in   \eqref{eq:controlReshaped} and taking expectation, we have 
\begin{equation}  \label{eq:tildeRtControl}
        \frac{t+1}{t}\frac{1}{t+1} \sum_{\tau=0}^{t}\sum_{i=1}^N \mathbb{E}[q_i(\tau)] \le \frac{U}{\eta} - \frac{1}{\eta} \sum_{\tau=0}^{t-1} \mathbb{E}[ \tilde{r}_\tau^Q].
\end{equation}
Hence,  we can still control the queue lengths by RL  with the reshaped reward $\tilde{r}_t^Q$. As compared to \eqref{eq:RtCondProof3},  the factor $1/t$ in front of the sum reward term in \eqref{eq:RtCondProof3} disappears in  \eqref{eq:tildeRtControl}.
The key aspect of  the reshaped reward is that the reward at time $t$ is discounted by the factor $\frac{T-t}{T}$, which is a monotone decreasing function of $t$ and decreases from one to zero as time elapses. This fact makes the reshaped reward suitable for practical RL  and this will be explained shortly.

\begin{remark} 
Note that the reshaped reward in \eqref{eq:tildertq} can be rewritten as
   \begin{equation}   \label{eq:remark3rtildeQ}
    \tilde{r}_t^Q = -\rho \sum_{i=1}^{N} \frac{T-t}{T}  [(\underbrace{q_i(t) + a_i(t)}_{\text{state at} ~t}-\underbrace{b_i(t)}_{\text{action at}~t})^\nu - \underbrace{q_i(t)^\nu}]. 
\end{equation}
Again, we want to express $\tilde{r}_t^Q$ as a deterministic function of the state and the action.  The term $q_i(t)+a_i(t)$ is already included in the set of state variables and $b_i(t)$ is deterministically dependent on the action. For our purpose, the last term $q_i(t)$ in the RHS of \eqref{eq:remark3rtildeQ} should be deterministically determined by the state. Hence, 
we included either $q_i(t)$ or $a_i(t)$ in addition to $q_i(t)+a_i(t)$ in the set of state variables, as seen in  Section \ref{sec:states}. 
In the case of $a_i(t)$ as a state variable,  $q_i(t)$ in  \eqref{eq:remark3rtildeQ}  is determined with no uncertainty from the state variables $q_i(t)+a_i(t)$ and $a_i(t)$. 
\end{remark}

Finally, let us consider practical RL.     
Practical RL tries to minimize the sum of exponentially discounted rewards $\sum_t \gamma^t r_t$ with a discount factor $\gamma < 1$  not $\sum_t r_t$ in order to guarantee the convergence of the Bellman equation \cite{sutton1998,khan2012reinforcement}, whereas our derivation up to now assumed the minimization of the sum of rewards. 
In RL theory, the Bellman operator is typically used to estimate the state-action value function, and it becomes a contraction mapping when the rewards are exponentially discounted by a discount factor $\gamma <1$. Then, the state-action value function converges to a fixed point and proper learning is achieved \cite{khan2012reinforcement}.  Hence, this discounting should be incorporated in our reward design. Note that $\gamma^t$ monotonically decreases from one to zero as time goes and that our reshaped reward $\tilde{r}_t^Q$ has the internal discount factor $\frac{T-t}{T}$, which also monotonically decreasing from one to zero as time goes.   
Even though the two discount factors are not the same exactly, their monotone decreasing behavior matches and plays the same role of discount.  With  the existence of the external RL discount factor $\gamma~(<0)$, we redefine our  reward for RL aiming at queue stability and penalty minimization as 
\begin{equation} \label{eq:finalrewardFunction}
 \hat{r}_t = - \rho  \sum_{i=1}^N  [q_i(t+1)^\nu - q_i(t)^\nu]  -V[C_E(t)+C_C(t)],
\end{equation}
with $\nu \ge 1$. Then, with the external RL discount factor, the actual  queue stability-related part of the reward in practical RL becomes $-\rho \sum_{i=1}^N \gamma^t[q_i(t+1)^\nu - q_i(t)^\nu] $.  Our reward \eqref{eq:tildertq} tries to approximate this actual reward by a first-order approximation 
with one step time difference form $[q_i(t+1)^\nu - q_i(t)^\nu]$.
Thus, the queue lengths can be controlled through the relationship \eqref{eq:tildeRtControl} by directly maximizing the sum of discounted rewards by RL.  Note that the penalty part is also discounted when we use the reward \eqref{eq:finalrewardFunction} in practical RL with reward discounting. However, this is not directly related to queue length control and such discounting is typical in practical RL.

\begin{remark}  \label{rem:remark4}
Note that the RHS of \eqref{eq:reshape39} is the time average of $q_i(t+1)^\nu$ over time 0 to $T-1$ with equal weight $1/T$, whereas the RHS of \eqref{eq:RewReshapDiff} is the time average of one-step difference $[q_i(t+1)^\nu - q_i(t)^\nu]$ over time 0 to $T-1$ with unequal discounted weight $\frac{T-t}{T}$.  Note that  the one-step difference form makes the impact of each $q_i(t+1)$ equal in the discounted average, as seen in \eqref{eq:discountEqual}. Suppose that we directly use $r_t^Q=-\rho \sum_{i=1}^N q_i(t+1)^\nu$ without one-step difference reshaping for practical RL. Then, the queue-stability-related part in the sum of discounted rewards $\sum_{t=0}^{T-1}\gamma^t r_t$ in practical RL becomes
\begin{equation}
\sum_{t=0}^{T-1} \gamma^t r_t^Q = -\rho \sum_{t=0}^{T-1} \sum_{i}\gamma^tq_i(t+1)^\nu, ~~~0< \gamma< 1,
\end{equation}
where  $\sum_{t=0,1,\cdots} \gamma^t (\cdot)$ can be viewed as a weighted time average with some scaling. 
Thus, the queue length of the initial phase of each episode is overly weighted.  Reshaping into the one-step difference form mitigates this effect by trying to make the impact of each $q_i(t+1)$ equal in the discounted average within first-order linear approximation.
\end{remark}

\begin{remark}
Now, suppose that we maximize the sum of undiscounted rewards and  use the one-step discounted reward, i.e., $\sum_{t=0}^{T-1} \hat{r}_t^Q$.
 Then, we have
\begin{align*}
    \frac{1}{\rho}\sum_{t=0}^{T-1} \hat{r}_t^Q &= -\sum_{t=0}^{T-1} \sum_i [q_i(t+1)^\nu -q_i(t)^\nu] \nonumber\\
    &=     -\sum_i q_i(T)^\nu + \sum_i q_i(T-1)^\nu-\sum_i q_i(T-1)^\nu
    +\nonumber\\
    &~~~\cdots+\sum_iq_i(0)^\nu=-\sum_i q_i(T)^\nu+\sum_i\underbrace{q_i(0)^\nu}_{=0}.
\end{align*}
Hence, the time average of queue length required to implement the strong stability  in Definition \ref{def:ss}
does not appear in the reward sum and 
maximizing the sum reward tries to minimize the queue length only at the final time step.  Thus, the one-step difference reward form  \eqref{eq:finalrewardFunction}  is valid for RL minimizing the sum of discounted rewards.
\end{remark}

When $\nu =2$, the queue-stabilizing part $\hat{r}_{t}^{Q}$  of our reward  \eqref{eq:finalrewardFunction}  reduces to  the negative of the drift term  $-\Delta L(t)$ in the Lyapunov framework, given in \eqref{eq:DelLtINReDesign}.  So, the negative of DPP can be used as the reward for practical RL minimizing the sum of discounted rewards not for RL minimizing the sum of rewards.  When $\nu=1$, $\hat{r}_{t}^{Q}$  simply reduces to 
$\hat{r}_{t}^{Q} = -\rho \sum_{i=1}^{N}[a_i(t)-b_i(t)]$.

Considering that RL tries to maximize the expected accumulated reward and  $\mathbb{E}[a_i(t)]=\lambda_i$, 
we can further stabilize the reward by replacing the random arrival $a_i(t)$ with its mean $\lambda_i$. For example, when $\nu=1$, we use
\begin{equation} \label{eq:hatrtiqxxx}
\hat{\hat{r}}_{t}^{Q} = -\rho \sum_{i=1}^{N} [ \lambda_i - b_i(t)],
\end{equation}
and when $\nu=2$, we use
\begin{equation} \label{eq:hatrtiqxxxNu2}
\hat{\hat{r}}_{t}^{Q} = -\rho \sum_{i=1}^{N}  \left\{2q_i(t)[ \lambda_i - b_i(t)] + [\lambda_i - b_i(t)]^2\right\},
\end{equation}
where  the arrival rate $\lambda_i$ can easily be estimated.  Note that with $\nu = 1$, the second upper bounding step  (b) in \eqref{eq:proofTheo2_35} is not required and hence we have a tighter upper bound, whereas the drift case $\nu=2$ has the advantage of length balancing across the queues due to the property of a quadratic function.

\section{Implementation}  \label{sec:implement}

Among several popular recent {DRL} algorithms, we choose the Soft Actor-Critic (SAC) algorithm, which is a state-of-the-art algorithm optimizing the policy in an off-policy manner \cite{sac,dac}. 
SAC  maximizes the discounted sum of the expected  return and the policy entropy to enhance exploration. Thus, the SAC policy objective function is given by 
\begin{equation}\label{eq:sac_objective}
J(\pi) = \mathbb{E}_{\xi\sim\pi} \left[    \sum_{t=0}^{T-1}\gamma^t({r}_t+\zeta \mathcal{H}(\pi(\cdot|s_t)))
        \right],
\end{equation}
where $\pi$ is the policy, $\xi=(s_0,\mathrm{a}_0,s_1,\mathrm{a}_1,\cdots)$ is the state-action trajectory, $\gamma$ is the reward discount factor, $\zeta$ is the entropy weighting factor,  $\mathcal{H}(\pi)$ is the policy entropy, and ${r}_t$ is the reward.
    \begin{figure}
        \centering
        \includegraphics[width=\linewidth]{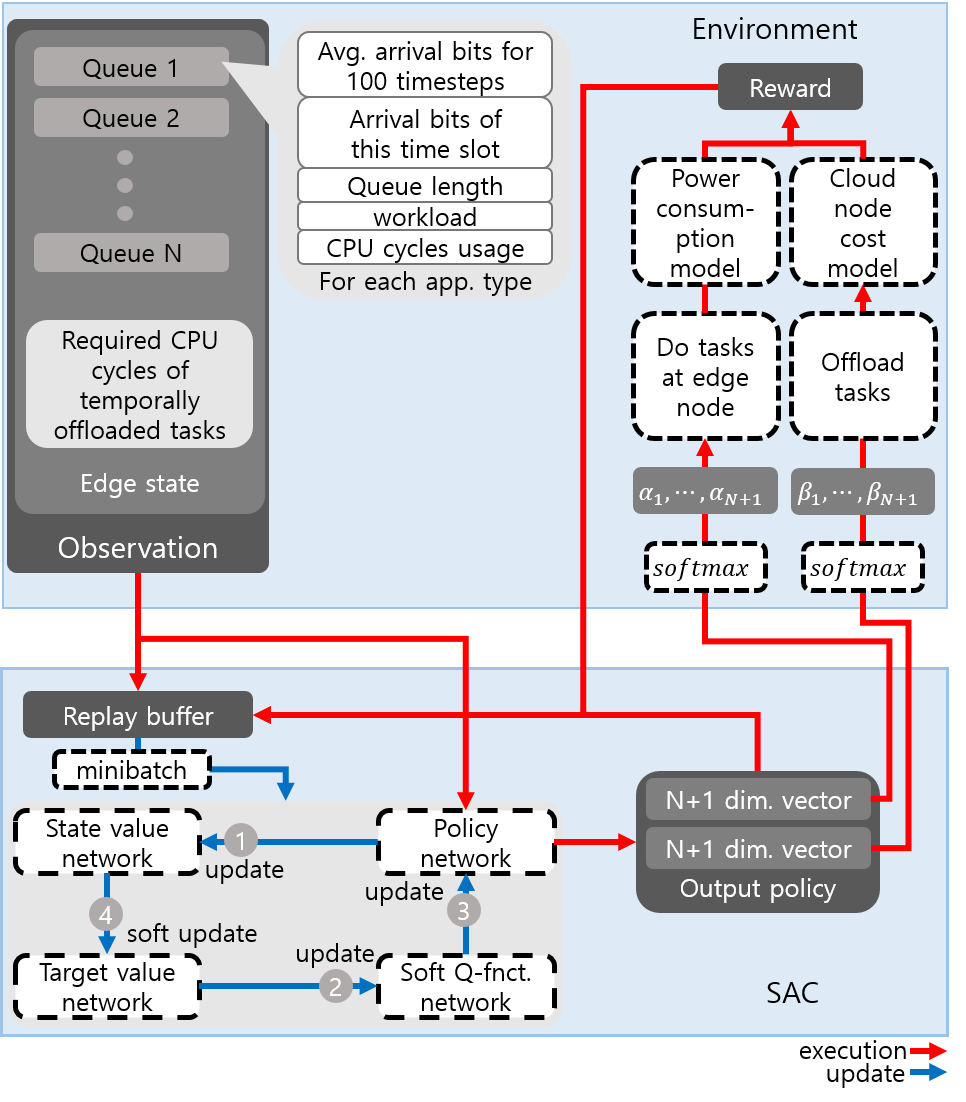}
        \caption{Process diagram of the environment and SAC networks}
        \label{fig:env_sac_flow}
    \end{figure}
   \begin{table}[!th]
        \centering
        \begin{tabular}{|l|c|}
        \hline
        \rowcolor{lightgray}
            \textbf{Parameter} & \textbf{Value}\\
            Optimizer & Adam optimizer\\
            Learning rate & $3\cdot 10^{-4}$\\
            Discount factor $\gamma$ & $0.999$ \\
            Replay buffer size & $10^6$ \\
            Number of hidden layers & $2$ \\
            Number of hidden units per layer & $256$ \\
            Number of samples per minibatch & $256$ \\
            Nonlinearity & ReLU\\
            Target smoothing coefficient  & 0.005\\
            Target update interval & 1\\
            Gradient step & 1\\
            \hline
        \end{tabular}
        \caption{SAC Hyperparameters}
        \label{tab:sac_params}
    \end{table} 
Fig. \ref{fig:env_sac_flow} describes the overall diagram of our edge computing environment and the SAC agent.  
For implementation of SAC, we followed the standard algorithm in \cite{sac} with the state variables and the action variables defined  in Section \ref{sec:states}. 
In order to implement the condition \eqref{eq:policy_constraint}, we implemented the policy deep neural network with dimension $2N+2$ by adding two  dummy variables for $\alpha_{N+1}(t) ~(\ge 0)$ and $\beta_{N+1}(t) ~(\ge 0)$ and applied the softmax function at the output of the policy network satisfying 
\begin{equation}\label{action_constraint}
        \sum_{i=1}^{N+1} \alpha_i(t) = 1, ~~~ \sum_{i=1}^{N+1} \beta_i(t) = 1, ~~\forall t.
\end{equation}   
Then, we took only $\alpha_i(t)$ and $\beta_i(t)$, $i=1,2,\cdots,N$ from the neural network output layer.
The used hyperparameters are the same as those in \cite{sac} except the discount factor $\gamma$ and the values are provided in Table \ref{tab:sac_params}. We assumed that  the DRL policy at the edge node updated its status and performed its action at every second. The episode length for learning was $T=5000$ time steps. Our implementation source code is available at  Github \cite{SoheeGit}.

\section{Experiments}  \label{sec:numerical}

\subsection{Environment Setup}
\label{subsec:EnvSetup}

In order to test the proposed DRL-based approach, we considered the following system.  With heavy computational load on smartphones caused by artificial intelligence (AI) applications, we  considered AI applications to be offloaded from smartphones to the edge node. 
The considered three AI application types were speech recognition, natural language processing and face recognition. 
The number of required CPU cycles $w_i$ for each application type was roughly estimated by running open or commercial software on smart phones and personal computers.  We assumed that the arrival process of the $i$-th application-type tasks was a Poisson process with mean arrival rate $\lambda_i$ [arrivals/second], as mentioned in Section \ref{sec:QueueDynamics}. We further assumed that the data size $d_i$ [bits] of each task arrival of the $i$-th application type followed   
  a truncated normal distribution $N_T(\mu_i, \sigma_i, d_{i,min},d_{i,max})$. We first set   the minimum and maximum data sizes $d_{i,min}$ and $d_{i,max}$ of one task for  the $i$-th application type  and then set the mean and standard deviation as $ \mu_i = (d_{i,max}+d_{i,min})/{2}$ and   $\sigma_i =  (d_{i,max}-d_{i,min})/{4}$. 
 \begin{table}[h]
    \centering
    \caption{Parameter setup for each application type}
        \begin{tabular}{|l|c|c|c|c|c|c|}
        \hline
        \rowcolor{lightgray} 
       \begin{tabular}[c]{@{}l@{}}\textbf{Application} \\ \textbf{type}\end{tabular}
        & $w_i$ &  \textbf{Distribution of $d_i$}   &$\lambda_i$ \\ \hline
        \begin{tabular}[c]{@{}l@{}}
        \textbf{Speech}\\ \textbf{recognition}
        \end{tabular}  & 10435 & $N_T(170,130,40,300)$ (kB)  & 5 \\ \hline
        \begin{tabular}[c]{@{}l@{}}
        \textbf{Natural language}\\ \textbf{processing}
        \end{tabular}
         & 25346 & $N_T(52,48,4,100)$ (kB) & 8 \\ \hline
        \begin{tabular}[c]{@{}l@{}}\textbf{Face}\\ \textbf{Recognition}\end{tabular}
         & 45043  & $N_T(55,45,10,100)$ (kB) &4 \\ \hline
        \end{tabular}
    \label{tab:app_info}
    \end{table}
We set the average number of  task arrivals of the $i$-th application-type $\lambda_i$ and   the minimum and maximum data sizes $d_{i,min}$ and $d_{i,max}$ of one task for the $i$-th application-type as shown in Table \ref{tab:app_info}.

We assumed a scenario in which  the cloud  node had  larger processing capability than the edge node and a good portion of processing was done at the cloud node.  We assumed that the edge node had 10 CPU cores and each CPU core had the processing capability of 4 Giga cycles per second [Gcycles/s or simply GHz]. Hence, the total processing power of the edge node was 40 Gcycles/s.  
Among the valid range of $\nu \ge 1$, we used $\nu=1$ or $\nu=2$, i.e.,  \eqref{eq:hatrtiqxxx} or
 \eqref{eq:hatrtiqxxxNu2} for our reward function. Thus, the overall reward function was given by 
\begin{equation}  \label{eq:SimReward}
  \hat{\hat{r}}_t^Q -V[C_E(t)+C_C(t)],
\end{equation}
where  $C_E(t)$ was the cost of edge processing given in
\eqref{eq:edgeCost} as
\begin{equation}  \label{eq:CEtFormula}
    C_E(t) = \sum_{j=1}^{N_E}\kappa f_{E,j}^3
\end{equation}
and $C_C(t)$ was the cost of offloading from the edge node to the cloud node. 
We considered two  cases for $C_C(t)$. The first one was a simple continuous function given by 
\begin{equation}  \label{eq:CctFormula}
C_C(t) =   \kappa  N_C \left( \frac{\sum_{i=1}^N w_i o_i(t)}{N_C} \right)^3,
\end{equation}
where $o_i(t)$ was the number of the offloaded task bits of the $i$-th application type to the cloud node, given by \eqref{eq:aNpi}, and $N_C$ was the number of the CPU cores at the cloud node. Note that the cost function \eqref{eq:CctFormula} follows the same principle as  \eqref{eq:CEtFormula} under the assumption that the overall workload $\sum_i w_i o_i(t)$ offloaded to the cloud node is evenly distributed over the $N_C$ CPU cores at the cloud node. We set the number of CPU cores at the cloud node as $N_C=54$ with each core having 4 Gcycles/s processing capability. Thus, the maximum cloud processing capability was set as 216 Gcycles/s.
 The second cost function for $C_C(t)$ was a discontinuous function, which will be explained in Section  \ref{subsubsec:discontinuousRewFunc}.
  We set  $\kappa=\displaystyle\frac{1}{(400 \text{GHz})^3}$ based on a rough estimation\footnote{Suppose that a  CPU core of $4$GHz clock rate  consumes $35$W and that  $10$kWh = 36,000 kW$\cdot$ s costs one dollar.   Then, from $1 :36,000 \text{kW}\cdot \text{s}=\kappa f_{E,j}^3 :35 \text{W}\cdot \text{s}$, we obtain $\kappa = \frac{35}{36,000\cdot 10^3 \cdot (4\text{GHz})^3}\approx 1/(400\textbf{GHz})^3$. }  and  $\rho=10^{-9}$ in our simulations. In fact, the values of $\kappa$ and $\rho$  were not critical since we  swept the weighting factor $V$ between the delay-related term and the penalty cost in order to see the  overall trade-off. This value setting was for the numerical dynamic range of the used SAC code.

{\em Feasibility Check:}  Note that the average arrival rates in terms of CPU clock cycles per second for the three application types are given by
\begin{equation*}
    \begin{aligned}
        \lambda_1\cdot \mu_1\cdot w_1&=5 \cdot 170 \cdot 8\cdot 1024   \cdot 10435 \simeq 72.7  ~\text{Gcycles/s}\\
        \lambda_2\cdot \mu_2\cdot w_2&=8 \cdot 52 \cdot 8 \cdot 1024 \cdot  25346 \simeq 86.4 \text{Gcycles/s}\\
        \lambda_3\cdot \mu_3\cdot w_3&=4 \cdot 55 \cdot 8 \cdot 1024 \cdot 45043 \simeq 81.2 \text{Gcycles/s}.
    \end{aligned}
\end{equation*}
The sum of the above three rates is roughly $240$ Gcycles/s among which $40$ Gcycles/s can be processed at maximum at the edge node. Table     \ref{tab:costcomp} shows the average values of $C_E(t)$ and $C_C(t)$ for different offloading to the cloud node based on \eqref{eq:CctFormula} under the assumption that the assigned workload is evenly distributed over the CPU cores both at the edge and cloud nodes 
(the unit of the first two columns  in Table     \ref{tab:costcomp} is Gcycles/s and the unit of the remaining three columns is $G^3\kappa = 10^{21}\kappa$). 
 \begin{table}[h]
    \centering
    \caption{Environment Setup Check}
        \begin{tabular}{|c|c|c|c|c|}
        \hline
        \rowcolor{lightgray} 
        At Edge  & At Cloud  & $C_E(t)$ & $C_C(t)$ & $C_E(t)+C_C(t)$ \\ 
        \hline
        40  & 200 &  640 & 2743  & 3383 \\
        \hline
        30  & 210 &  270  &  3175 & 3445\\
        \hline
        20  & 220 & 80  & 3651  & 3731 \\
        \hline
        \end{tabular}
    \label{tab:costcomp}
    \end{table}
As seen in Table     \ref{tab:costcomp}, the edge node should process for smaller overall cost.  If we offload all tasks to the cloud node, the required communication bandwidth is given by $\sum_i \lambda_i \mu_i = 12.2 \text{Mbps}$. So, we set the communication bandwidth $B=20$Mbps so that the communication is not a bottleneck for system operation. Since the overall service rate provided by the edge and cloud nodes  is $256 ~(=40+216)$ Gcycles/s and the average arrival rate is $240$ Gcycles/s and the communication bandwidth is not a bottleneck, the overall system is feasible to control.  The DRL resource allocator should learn a policy that distributes the arriving tasks to the edge and cloud nodes optimally.

\subsection{Convergence and Comparison with the DPP Algorithm}
\label{subsec:simDPPcomp}

We tested our DRL-based approach with the proposed reward function for the system described in 
Section \ref{subsec:EnvSetup} 
with $C_C(t)$ given by the simple continuous cost function    \eqref{eq:CctFormula}, and compared its performance to that of the DPP algorithm. For comparison, we used the basic DPP algorithm in Algorithm  \ref{alg:mdpp} with the cost at each time step given by 
\begin{align} 
    & \sum_{i=1}^N q_i(t) \left(a_i(t)-\frac{\alpha_i(t)f_E}{w_i} - \beta_i(t) B\right)  \nonumber \\
    &~~~~~~~~~+ V'[C_E(\pmb{\alpha}(t)) + C_C(\pmb{\alpha}(t),\pmb{\beta}(t))], \label{eq:DPPsimReward}
\end{align}
where the quadratic term in the RHS of \eqref{eq:DpPupperBound} was replaced by constant upper bound and omitted.  Note that the weighting factor $V'$ is different from  the weighting factor $V$ in  \eqref{eq:SimReward} in order to take into account the difference in the stability-related terms in the two cost functions \eqref{eq:SimReward} and \eqref{eq:DPPsimReward}.  In the DPP algorithm, for each time step, optimal $\pmb{\alpha}(t)$ and $\pmb{\beta}(t)$ were found by minimizing \eqref{eq:DPPsimReward} for given $q_i(t)$ and $a_i(t)$. For this numerical optimization, we used sequential quadratic programming (SQP), which is an iterative method solving the original constrained nonlinear optimization with successive  quadratic approximations \cite{nocedal2006sequential} and  is  widely used with several available software including MATLAB, LabVIEW and SciPy. We used SciPy of python to implement the DPP algorithm.

  \begin{figure}[t]
        \subfigure[]{
        \centering
        \includegraphics[width=\linewidth]{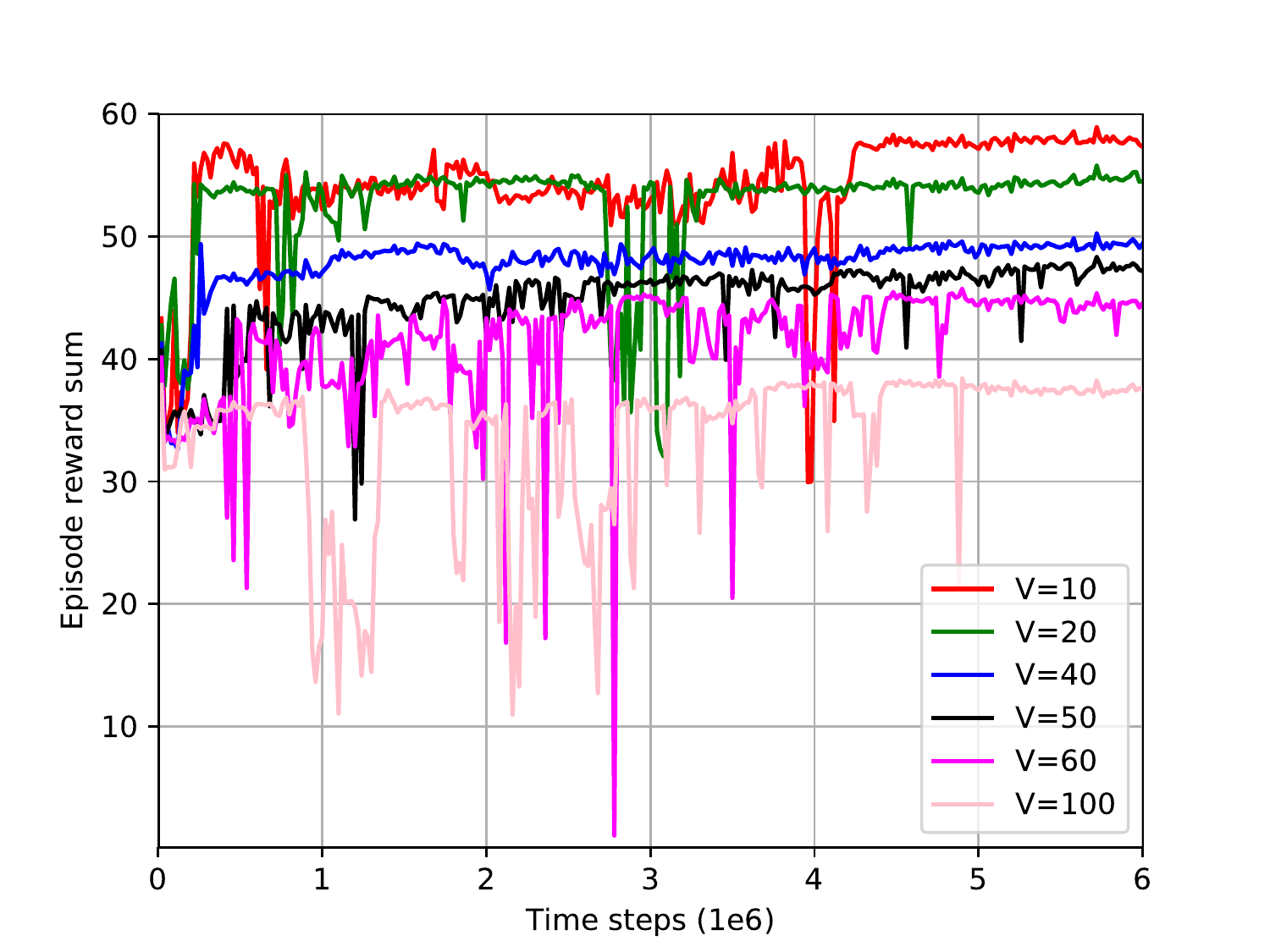}
        }
        \subfigure[]{
        \centering
        \includegraphics[width=\linewidth]{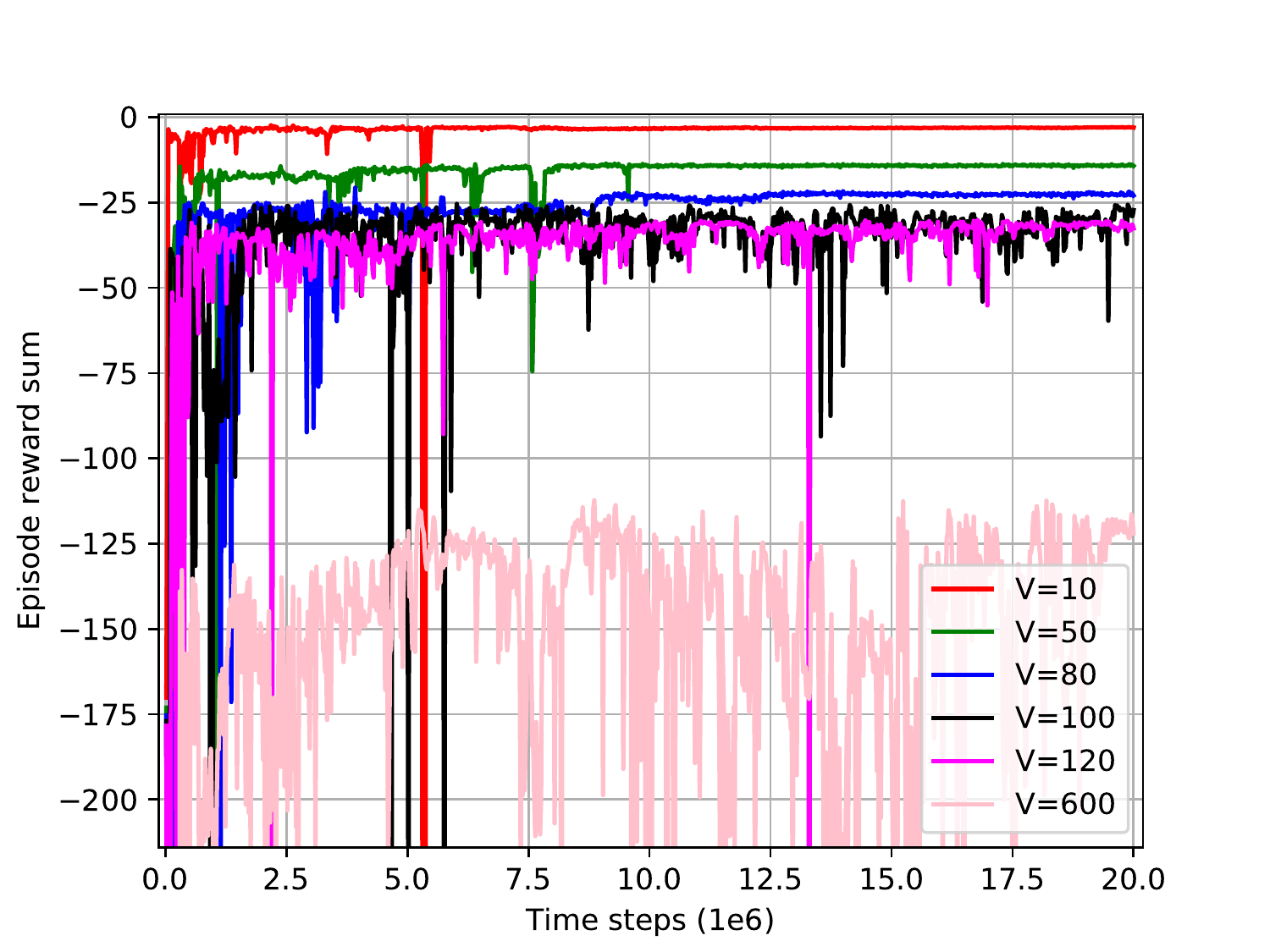}
    }
    \caption{The learning curve of the proposed DRL algorithm: (a) $\nu=1$, i.e., \eqref{eq:SimReward} with \eqref{eq:hatrtiqxxx}  and (b) $\nu=2$, i.e., \eqref{eq:SimReward} with \eqref{eq:hatrtiqxxxNu2}}
    \label{fig:learningCurves_step11}
    \end{figure}

 For SAC, we did the following. In the beginning, all weights in the neural networks for the value function and the policy were randomly initialized.   We generated four episodes, collected 20,000 samples, and stored them  into the sample buffer, where one episode for training was composed of $T=5000$ time steps and in the beginning of each episode, all queues were emptied.\footnote{In Atari games, one episode typically corresponds to one game starting from the beginning.} Then, with the samples in the sample buffer, we trained the neural networks. With this trained policy, we generated one episode  and evaluated the performance with the evaluation-purpose episode. Then, we again generated and stored four episodes of 20,000 samples into the sample buffer,  trained the policy with the samples in the sample buffer, and evaluated the newly trained policy with one evaluation episode. We repeated this process.

 First, we checked the convergence of  SAC with the proposed reward function.  Fig.  \ref{fig:learningCurves_step11} shows its learning curve for different values of the weighting factor $V$ in \eqref{eq:SimReward}. The $x$-axis in Fig.  \ref{fig:learningCurves_step11}  is the training episode time step (not including the evaluation episode time steps) and the $y$-axis is the episode reward sum $\sum_{\tau=0}^{T-1} r_\tau$ based on \eqref{eq:SimReward} without discounting for the evaluation episode corresponding the $x$-axis value. Note that although SAC itself tries to maximizes the sum of discounted rewards, we plotted the undiscounted episode reward sum by storing \eqref{eq:SimReward} at each time step.  It is observed that the proposed DRL algorithm converges as time goes. 
   \begin{figure}[t]
        \centering
        \includegraphics[width=\linewidth]{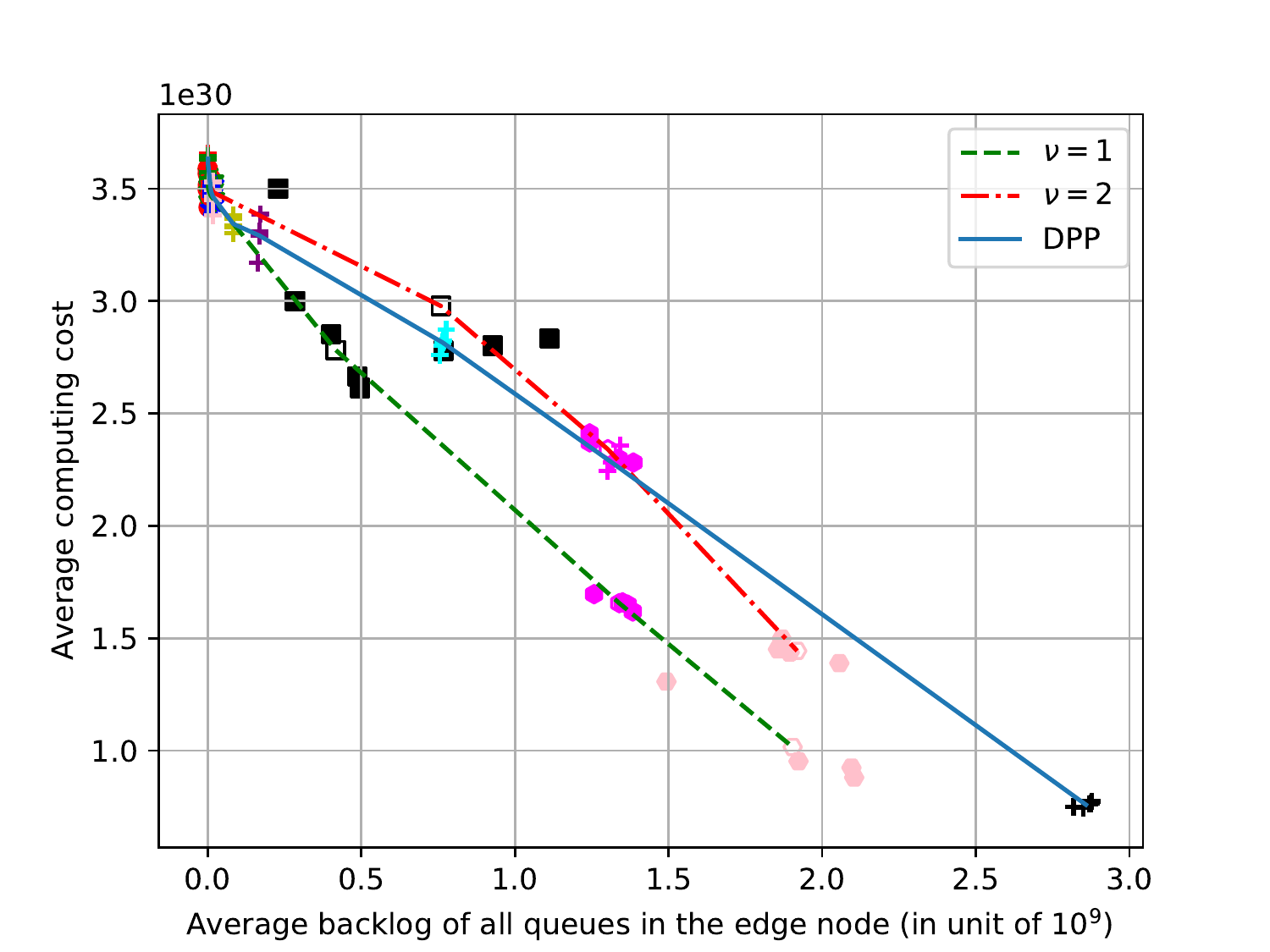}
    \caption{Average episode penalty  versus average episode queue length}
    \label{fig:performanceComparison}
    \end{figure}

With the verification of our DRL algorithm's convergence, we compared the DRL algorithm with the DPP algorithm conventionally used for the Lyapunov framework.   
Fig. \ref{fig:performanceComparison} shows the trade-off performance between the penalty cost and the average queue length of the two methods. For the DRL method, we assumed that the policy has converged after 6$M$ and 20$M$ time steps for $\nu=1$ and $\nu=2$, respectively, based on the result in Fig.   \ref{fig:learningCurves_step11}, and picked this converged policy as our execution policy.  With the execution policy, we ran several episodes for each  $V$ and computed the average episode penalty cost $\frac{1}{T}\sum_{t=0}^{T-1}[C_E(t)+C_C(t)]$ and the average episode queue length $\frac{1}{T}\sum_{t=0}^{T-1}\sum_{i=1}^N q_i(t)$. We plotted the points in the 2-D plane of the average episode queue length and the average episode penalty cost by sweeping $V$. The result is shown in Fig. 
\ref{fig:performanceComparison}. 
Each  line in Fig. \ref{fig:performanceComparison} is the connecting line through the mean value of multiple episodes for each $V$ for each algorithm. For the DPP algorithm, multiple episodes with the same length $T=5000$ were tested for each $V'$ and the trade-off curve was drawn. The weighting factors $V$ and $V'$ in \eqref{eq:SimReward} and \eqref{eq:DPPsimReward} were separately swept.  It is seen that the DRL approach with $\nu=2$ shows a similar trade-off performance to that of the DPP algorithm, and the DRL approach with $\nu=1$ shows a better trade-off performance than the DPP algorithm. This is because the stability related part of the reward directly becomes the queue length  when $\nu=1$.
    \begin{figure}[t]
    \centering
    \subfigure[Avg. bits: $2.21\cdot10^5$]{
        \centering
        \includegraphics[width=0.45\linewidth]{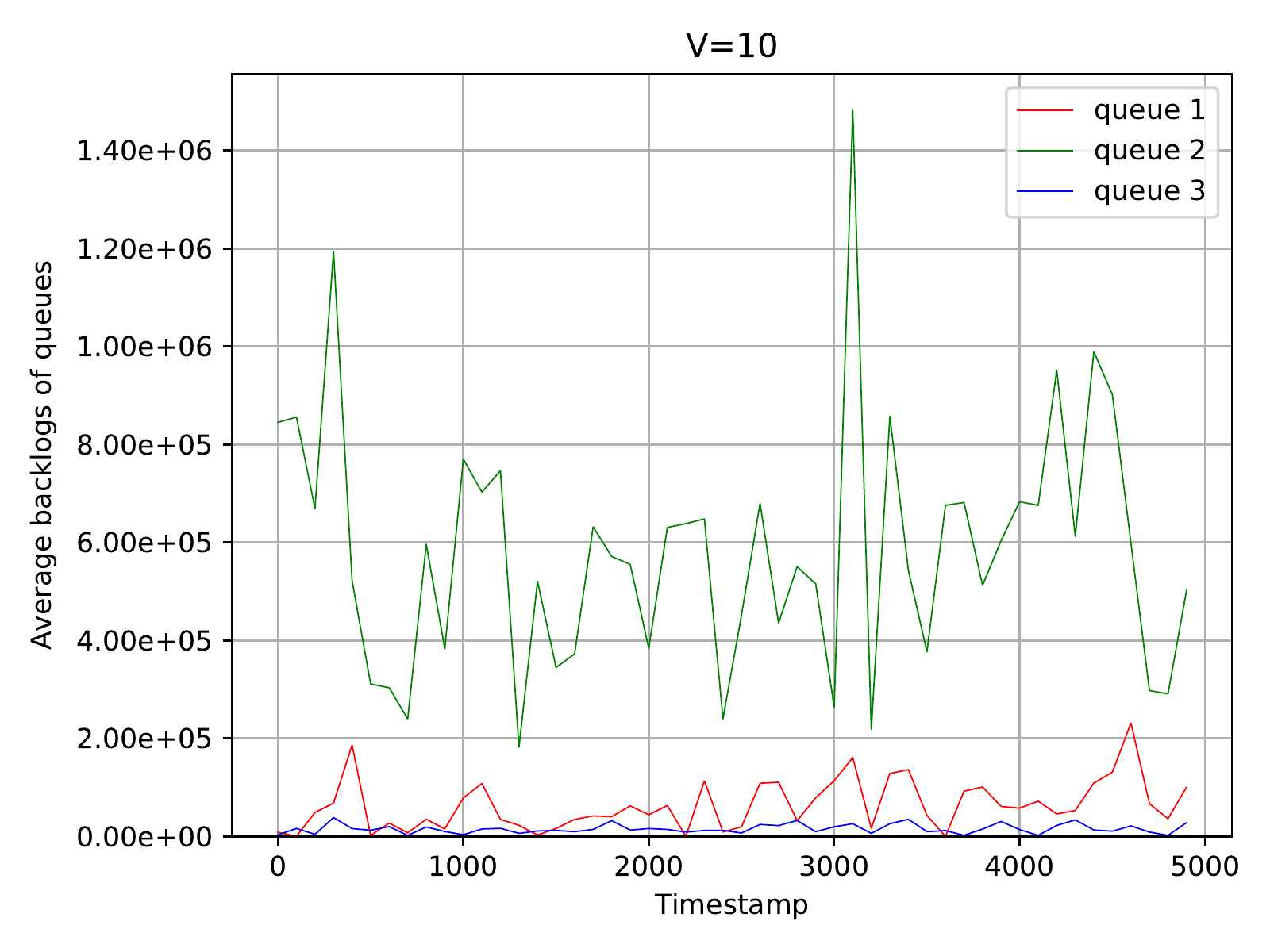}
    }\hfil
    \subfigure[Avg. bits: $2.38\cdot10^6$]{
        \centering
        \includegraphics[width=0.45\linewidth]{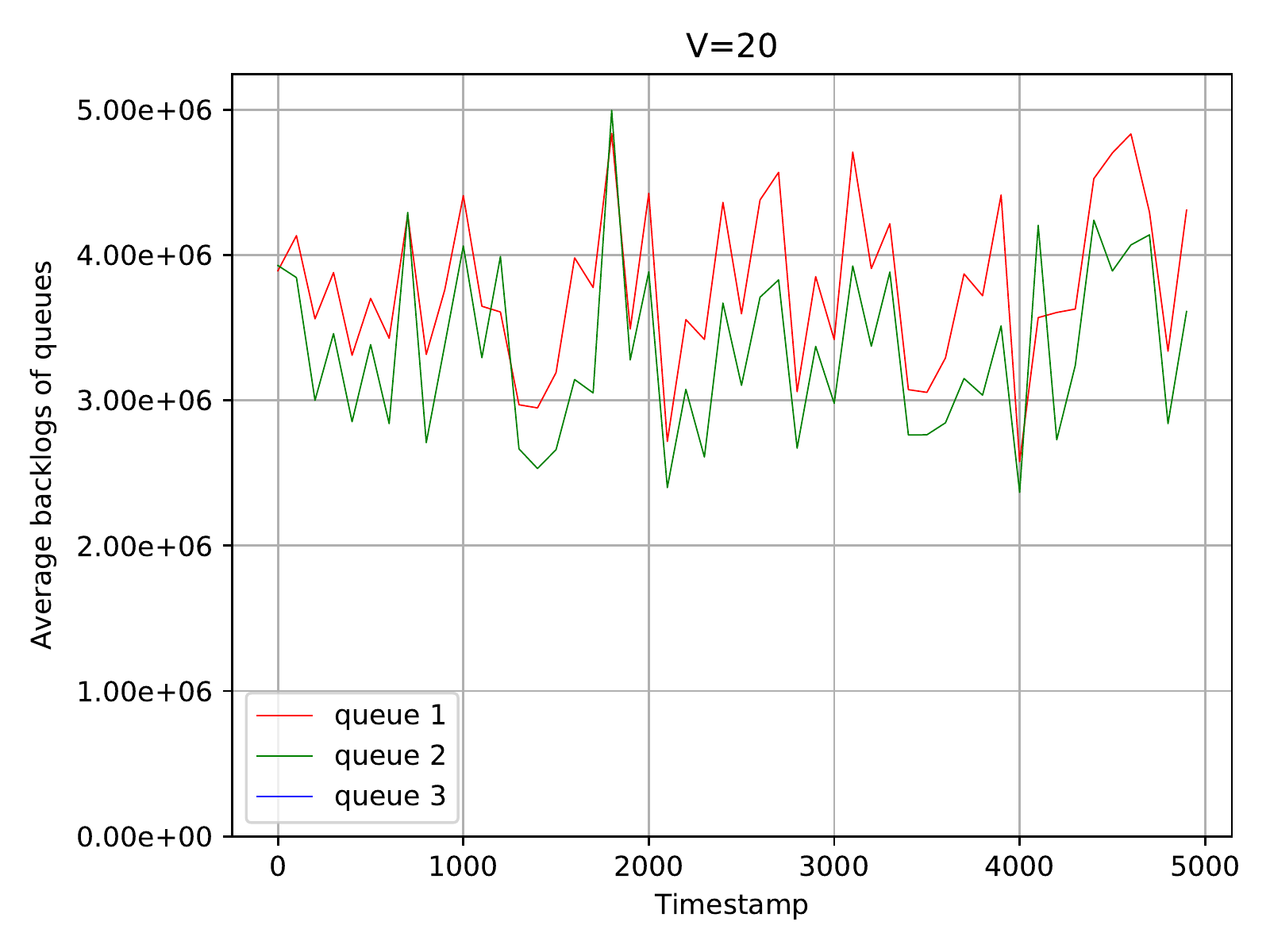}
    }\hfil
    \subfigure[Avg. bits: $1.40\cdot10^7$]{
        \centering
        \includegraphics[width=0.45\linewidth]{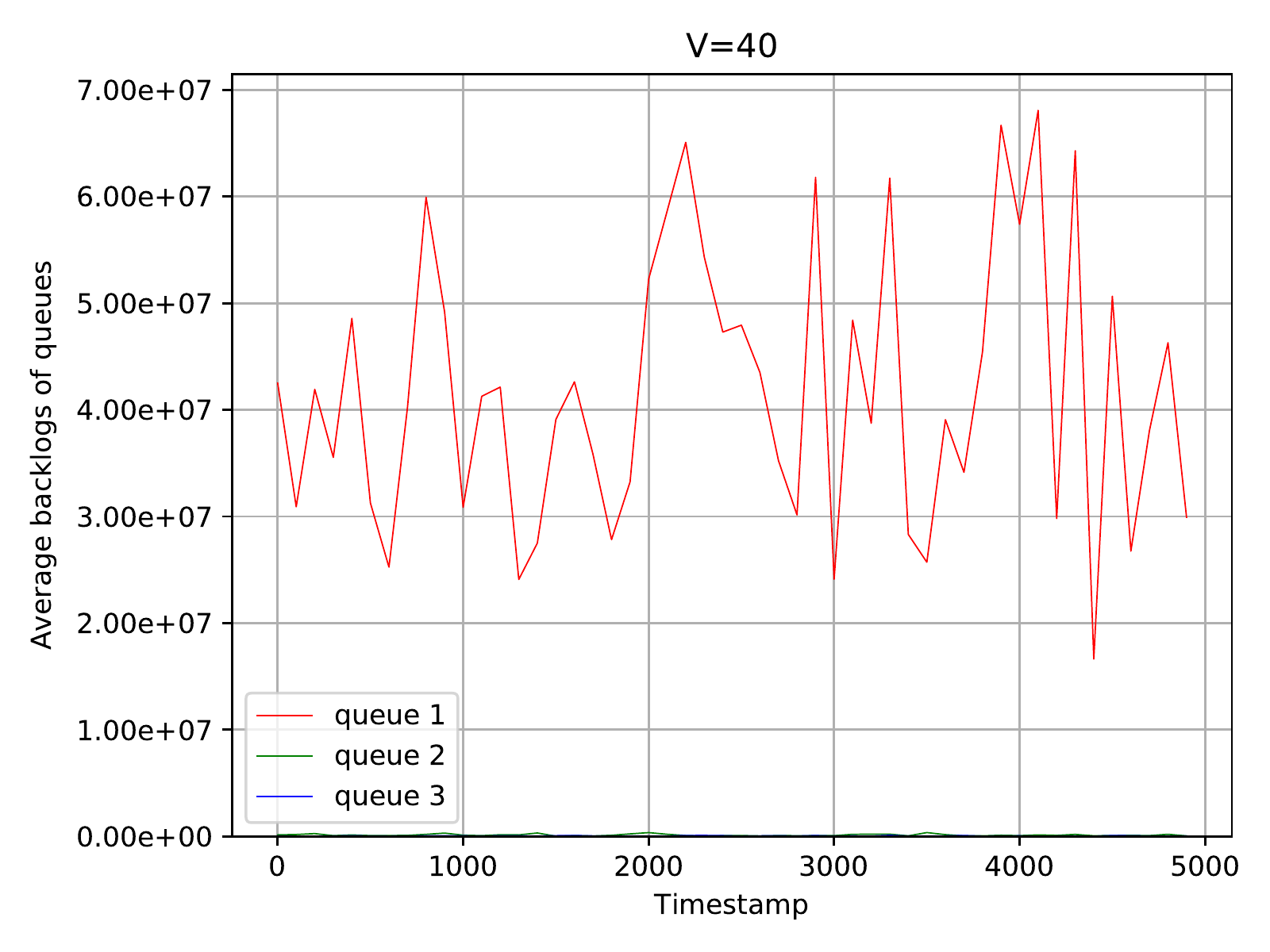}
    }\hfil
    \subfigure[Avg. bits: $4.96\cdot10^8$]{
        \centering
        \includegraphics[width=0.45\linewidth]{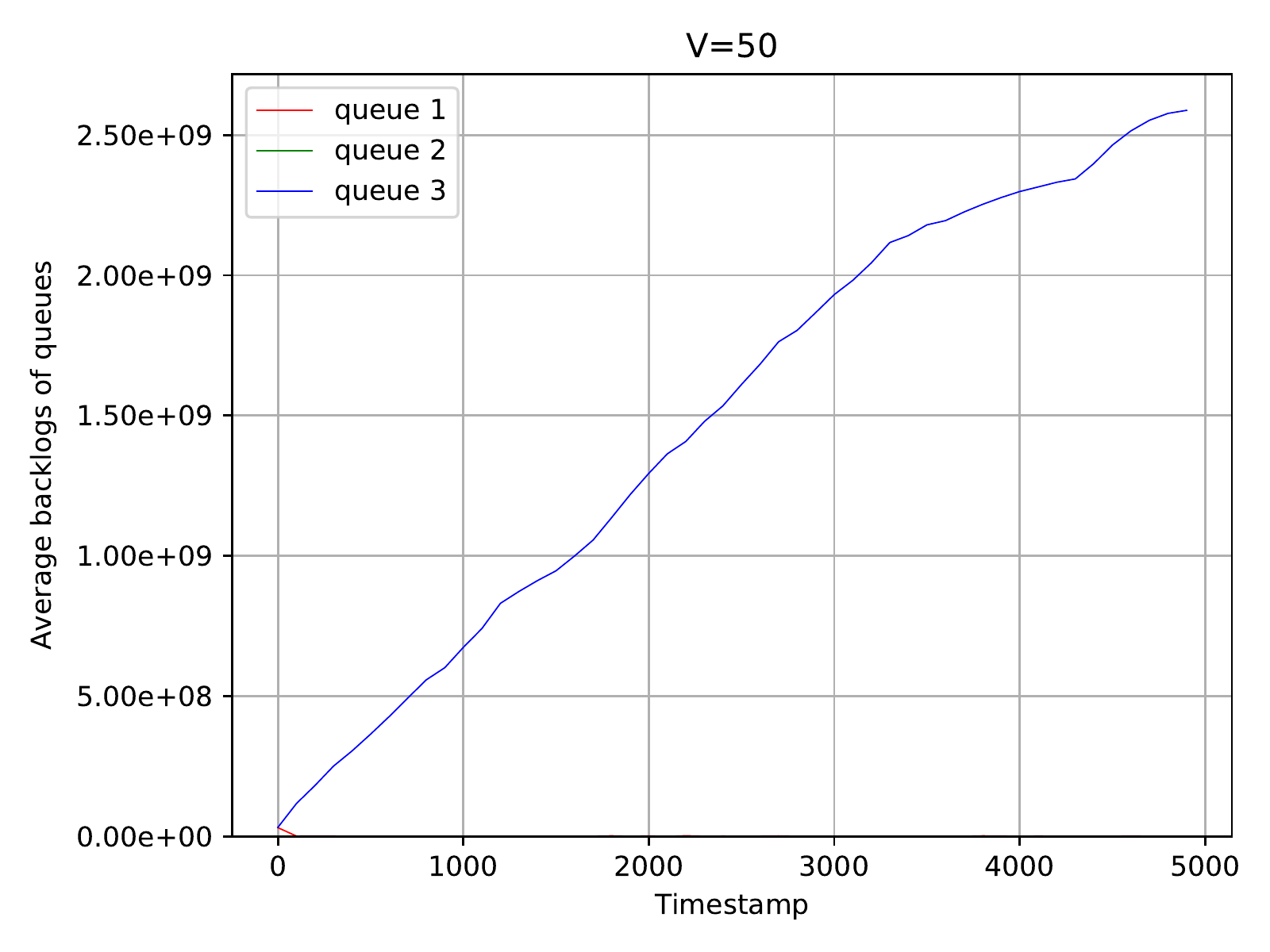}
    }
    \caption{Queue length $\sum_{i=1}^N q_i(t)$ over time: $\nu=1$}
    \label{fig:q_for_even_drl_nu1}
    \end{figure}
    \begin{figure}[h]
    \centering
    \subfigure[Avg. bits: $9.75\cdot10^6$]{
        \centering
    \includegraphics[width=0.45\linewidth]{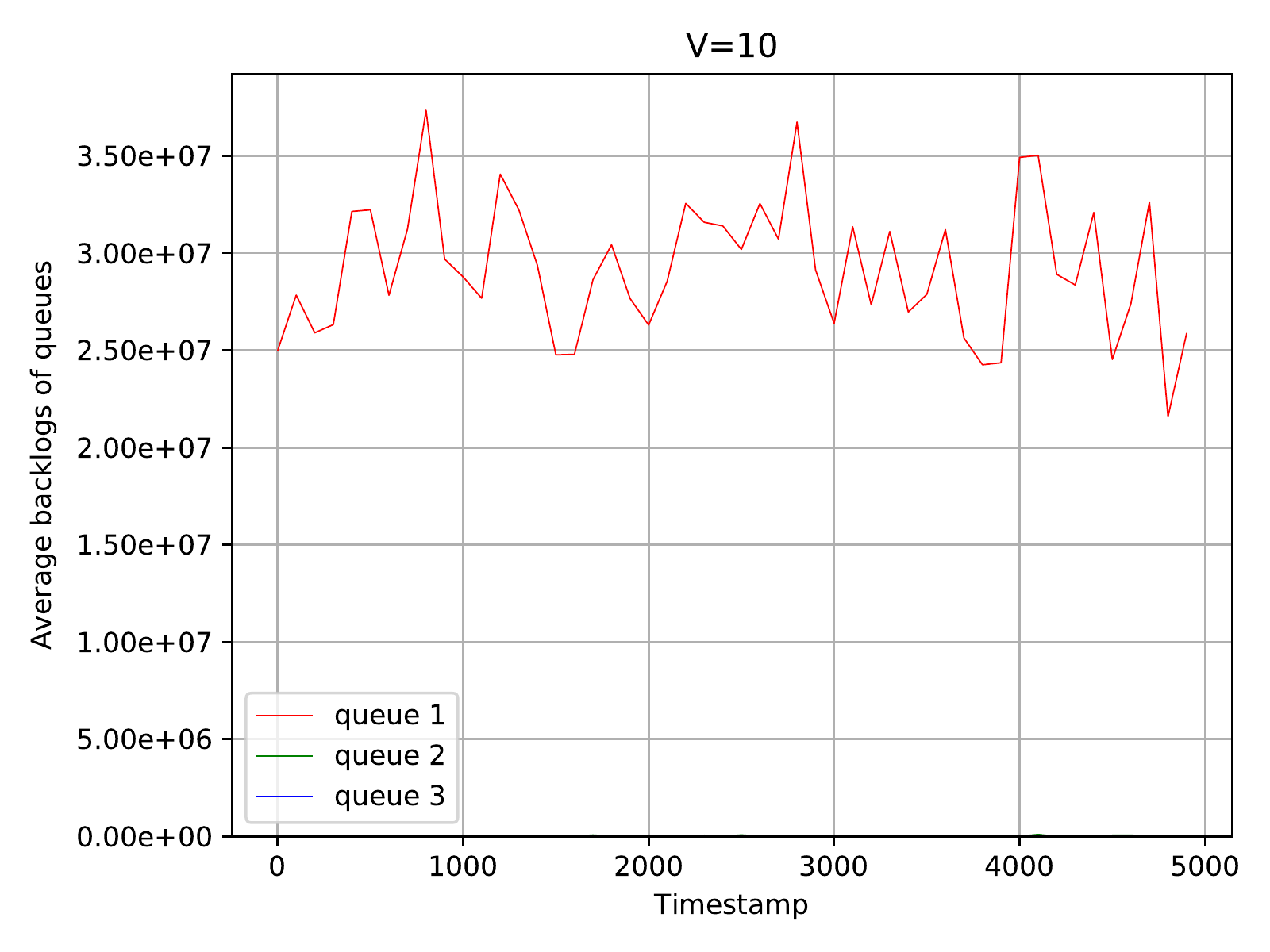}
    }\hfil
    \subfigure[Avg. bits: $2.06\cdot10^7$]{
        \centering
        \includegraphics[width=0.45\linewidth]{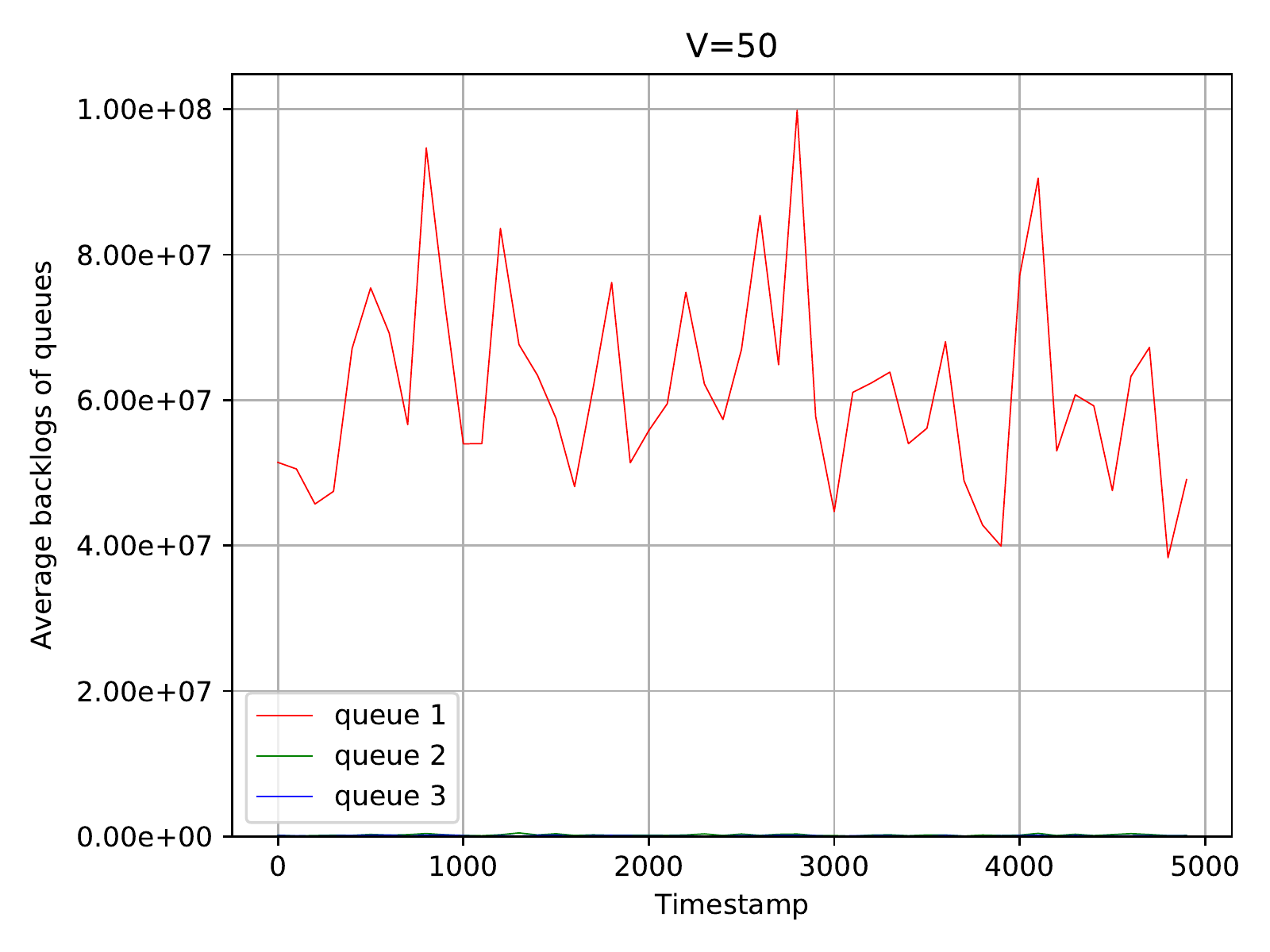}
    }\hfil
    \subfigure[Avg. bits: $1.11\cdot10^9$]{
        \centering
        \includegraphics[width=0.45\linewidth]{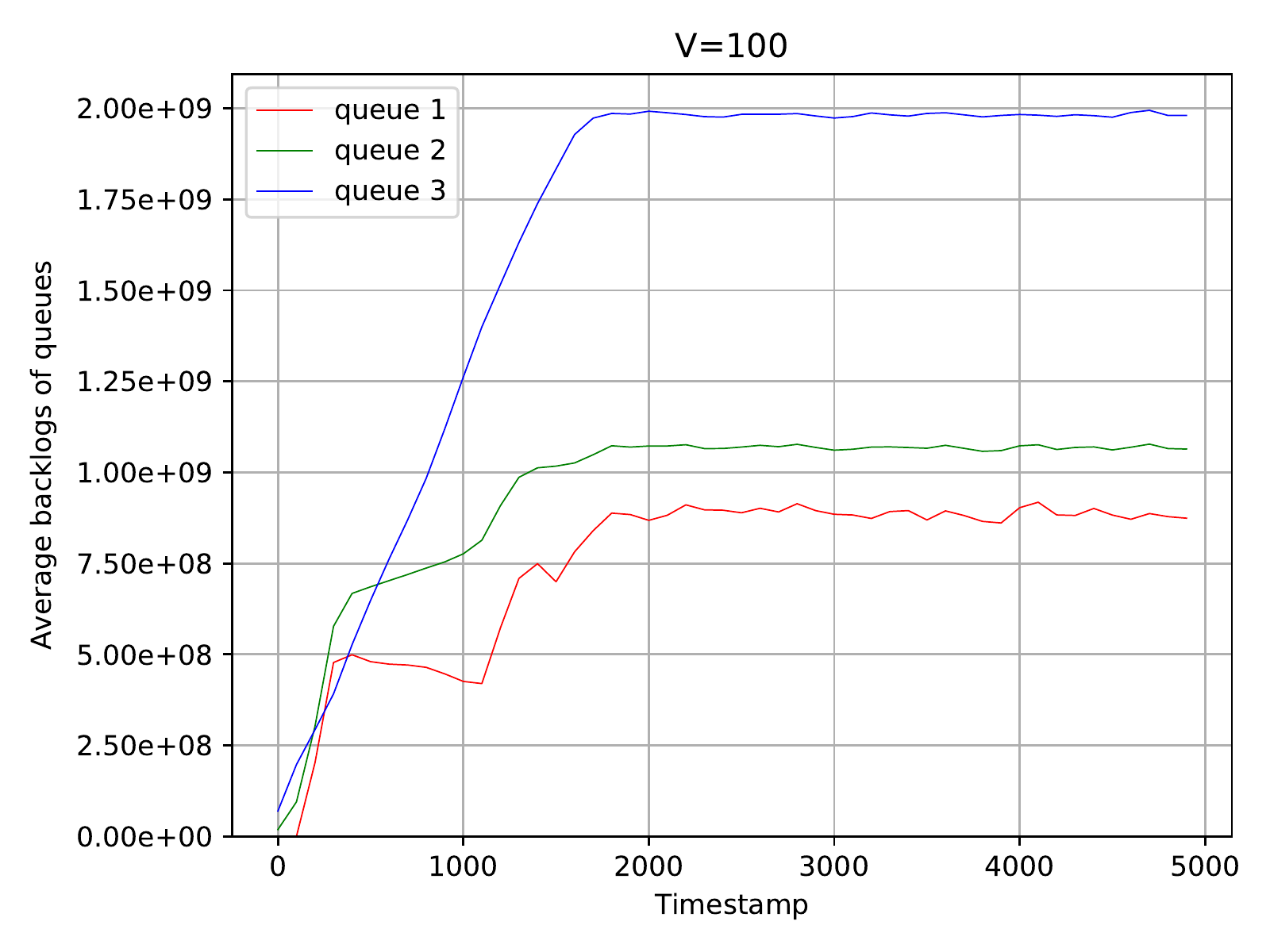}
    }\hfil
    \subfigure[Avg. bits: $1.33\cdot10^9$]{
        \centering
        \includegraphics[width=0.45\linewidth]{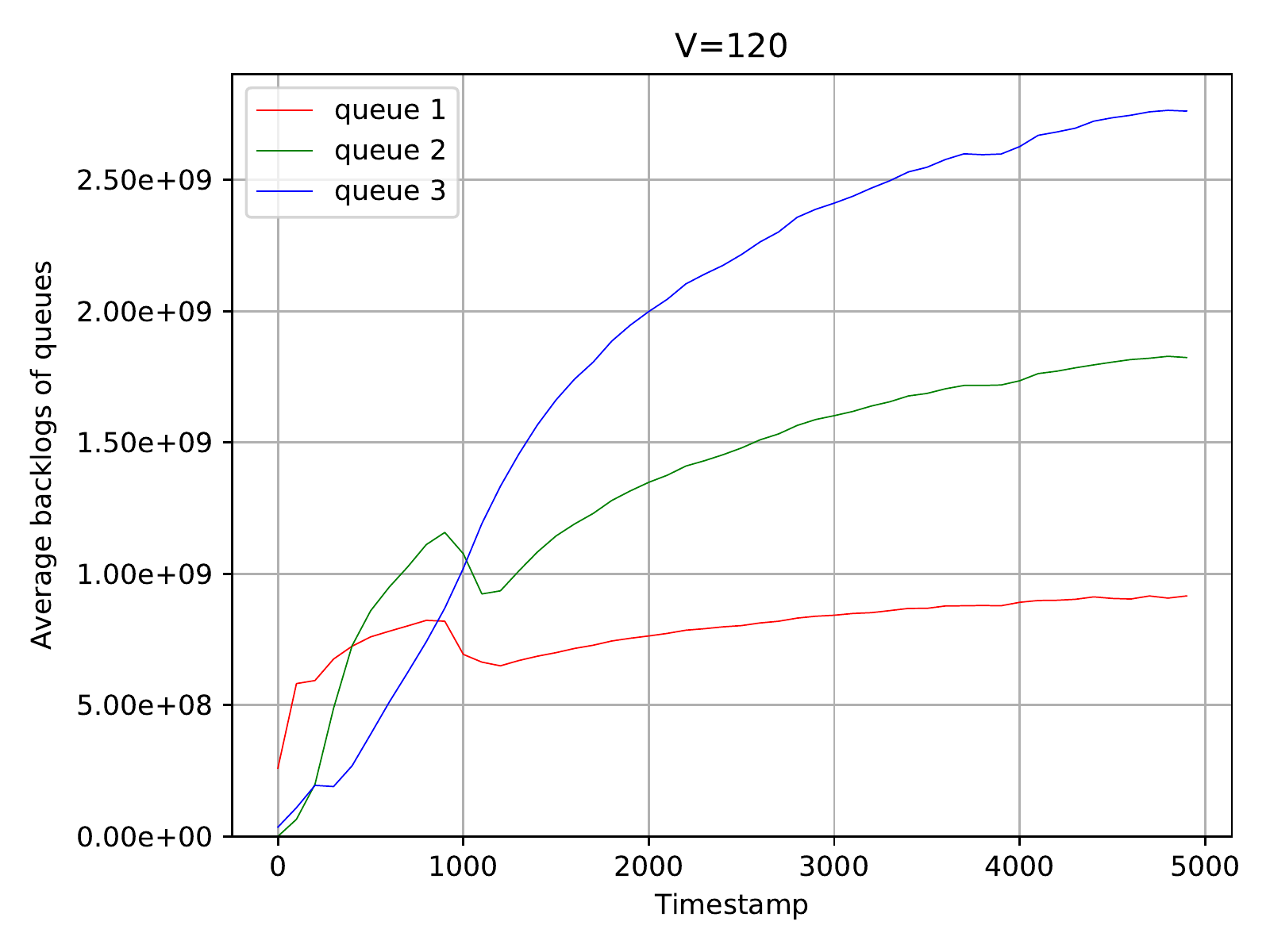}
    }
    \caption{Queue length $\sum_{i=1}^N q_i(t)$ over time: $\nu=2$}
    \label{fig:q_for_even_drl_nu2}
    \end{figure}
Then, we checked the actual queue length evolution with initial zero queue length for the execution policy, and the result is shown in Figs.  \ref{fig:q_for_even_drl_nu1} and
    \ref{fig:q_for_even_drl_nu2} 
    for $\nu=1$ and $\nu=2$, respectively. 
    It is seen that  the queues are stabilized up to the average length $\sim 10^8$ for $\nu=1$ and up to the average length $\sim 10^9$ for $\nu=2$.
    However, beyond a certain value of $V$, the penalty cost becomes dominant and the agent learns a policy that focus on the penalty cost reduction while sacrificing the queue stability.  Hence,  the upper left region in      Fig. \ref{fig:performanceComparison} is the desired operating region with queue stability.

\subsection{General reward function: A discontinuous function case}
\label{subsubsec:discontinuousRewFunc}

In the previous experimental example, it is observed  that in the queue-stabilizing operating region the performance of DRL-SAC and the DPP performance is more or less the same, as seen in Fig. \ref{fig:performanceComparison}. This is because the DPP algorithm yields a solution with performance within a constant gap from the optimal value due to the Lyapunov optimization theorem.  The effectiveness of the proposed DRL approach is its versatility for general reward functions in addition to the fact that optimization is not required once the policy is trained. 
Note that the DPP algorithm requires solving a constrained nonconvex optimization for each time step in the general reward function case.  Although such constrained nonconvex optimization can be approached by several methods such as  successive convex approximation (SCA)  \cite{Scutari:book} as we did in Section \ref{subsec:simDPPcomp}. However, such methods requires certain properties on the reward function such as continuity, differentiability, etc. and it may be difficult to apply them to general reward functions such as reward given by a table.   In order to see the generality of the DRL approach to the Lyapunov optimization, we considered a more complicated penalty function. We considered the same reward $C_E(t)$ given by 
\eqref{eq:CEtFormula} but  instead of \eqref{eq:CctFormula},  $C_C(t)$ was given by a scaled version of  the number of CPU cores at the cloud node required to process the offloaded tasks under the assumption that each CPU core was fully loaded with it maximum clock rate $4$ GHz before the next core was assigned. This $C_C(t)$ is a discontinuous function of the amount of the offloaded task bits. All other set up was the same as that in Section \ref{subsec:simDPPcomp}. With this penalty function, it is observed that the DPP algorithm based on SQP failed but the DRL-SAC with the proposed state and reward function successfully learned a policy. 
Fig. \ref{fig:learningCurves_step} shows the learning curves of DRL-SAC in this case for different values of $V$ with $\nu=1$. (The plot was obtained in the same way as that used for   Fig. \ref{fig:learningCurves_step11}.) Fig.  \ref{fig:tradeoffperformance_disconti} shows the corresponding DRL-SAC trade-off performance between the average episode penalty and the average episode queue length, and Fig.  \ref{fig:q_for_step} shows the queue length over time for one episode for the trained policy in the execution phase. 
Thus,  the DRL approach operates properly in a more complicated penalty function for which the DPP algorithm may fail.

    \begin{figure}[t]
        \subfigure{
        \centering
        \includegraphics[width=\linewidth]{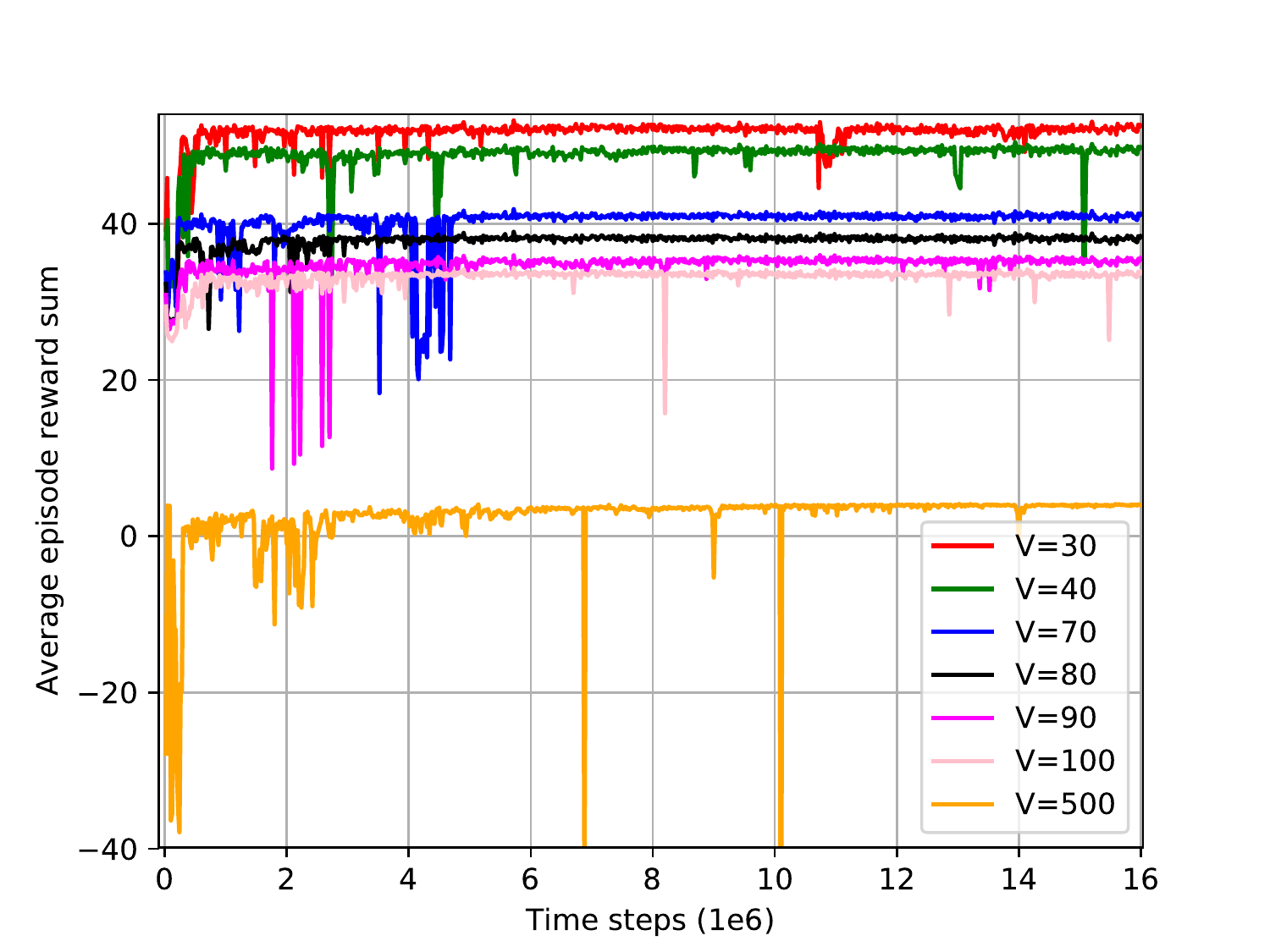}
        }
    \caption{Learning curves of DRL-SAC for the discontinuous $C_C(t)$ function}
    \label{fig:learningCurves_step}
    \end{figure}

    \begin{figure}[!t]
        \subfigure{
        \centering
        \includegraphics[width=\linewidth]{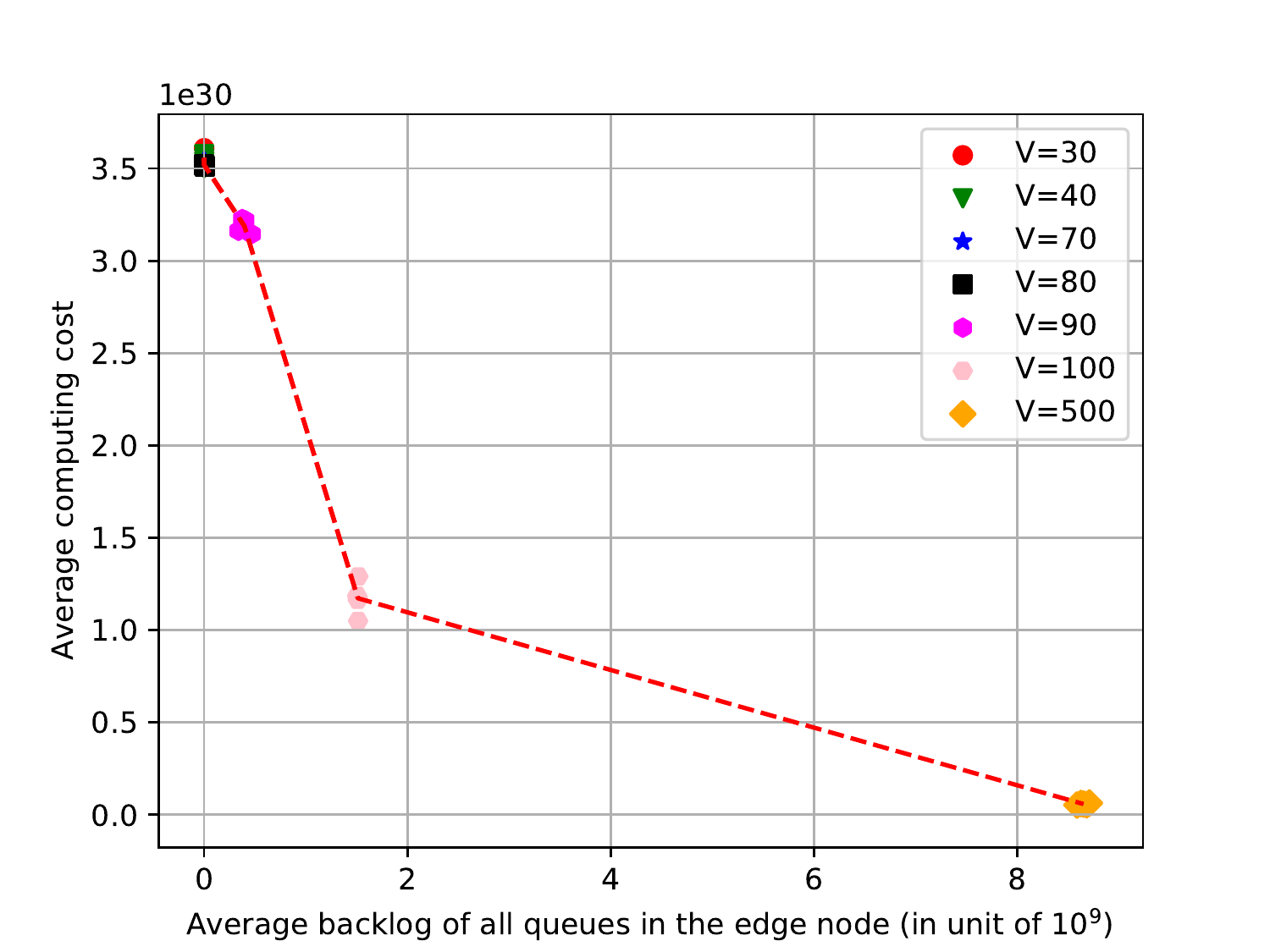}
        }
    \caption{Average episode penalty  versus average episode queue length}
    \label{fig:tradeoffperformance_disconti}
    \end{figure}

    \begin{figure}
    \centering
    \subfigure[]{
        \centering
        \includegraphics[width=0.45\linewidth]{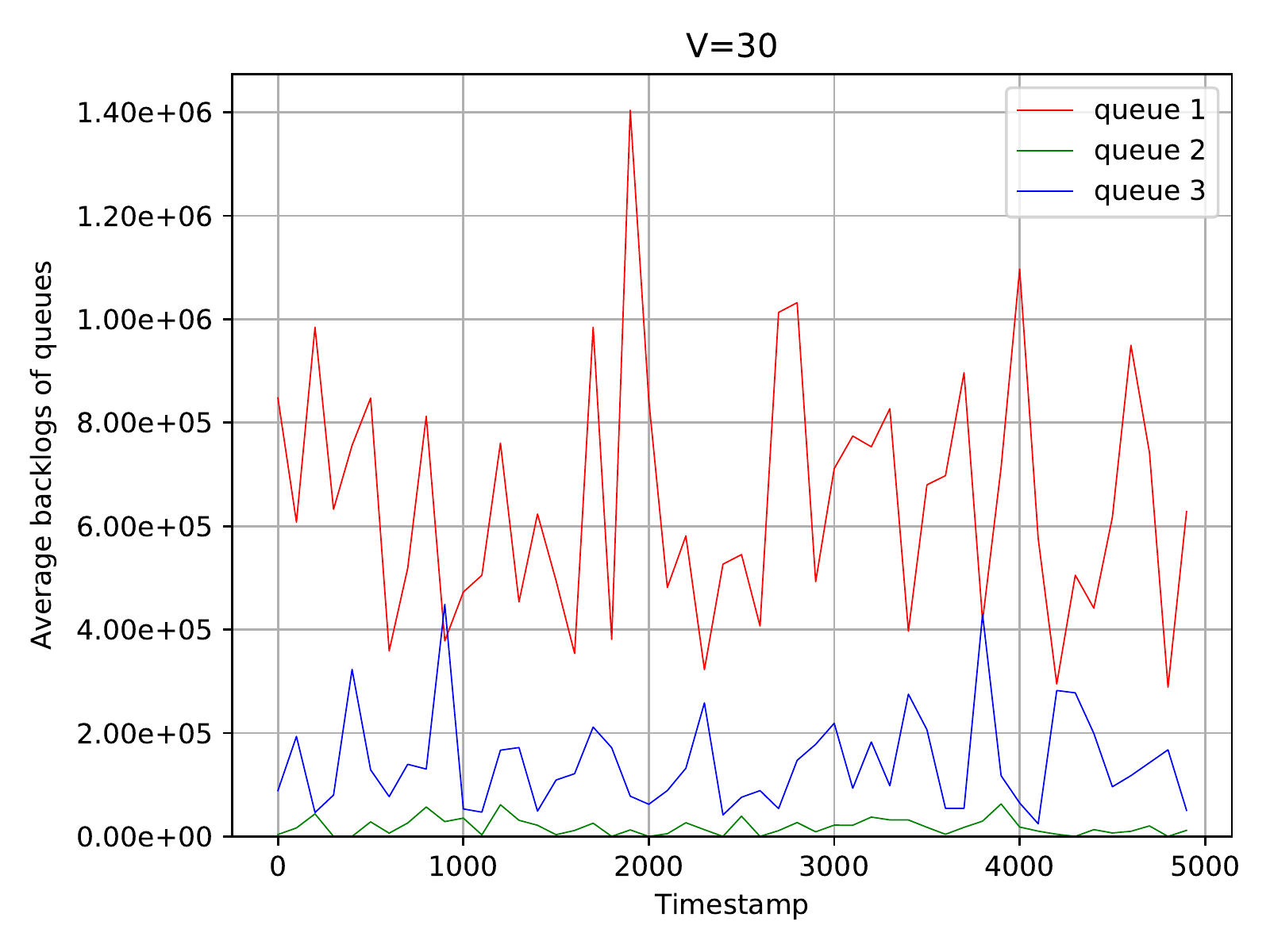}
    }\hfil
    \subfigure[]{
        \centering
        \includegraphics[width=0.45\linewidth]{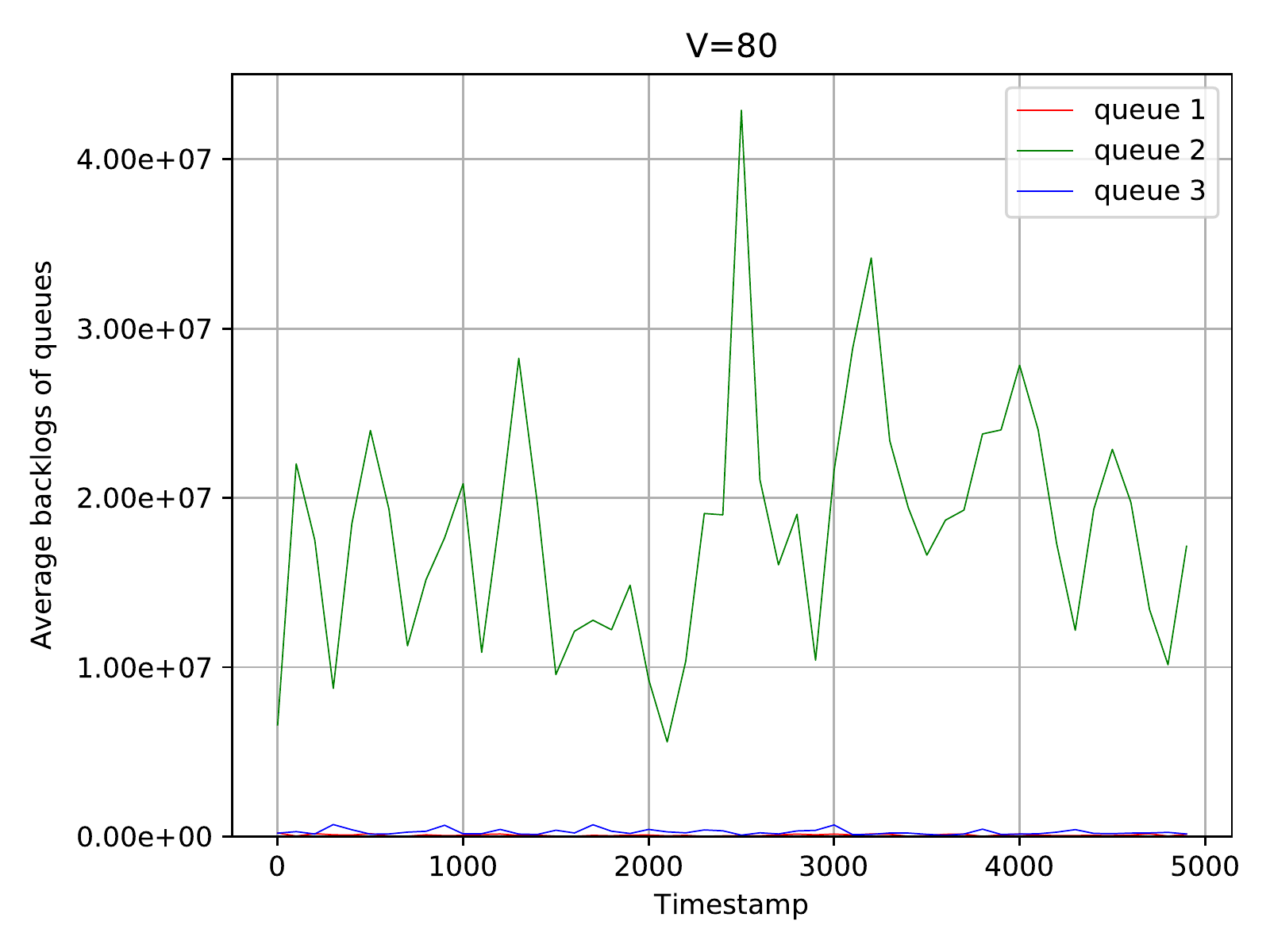}
    }\hfil
    \subfigure[]{
        \centering
        \includegraphics[width=0.45\linewidth]{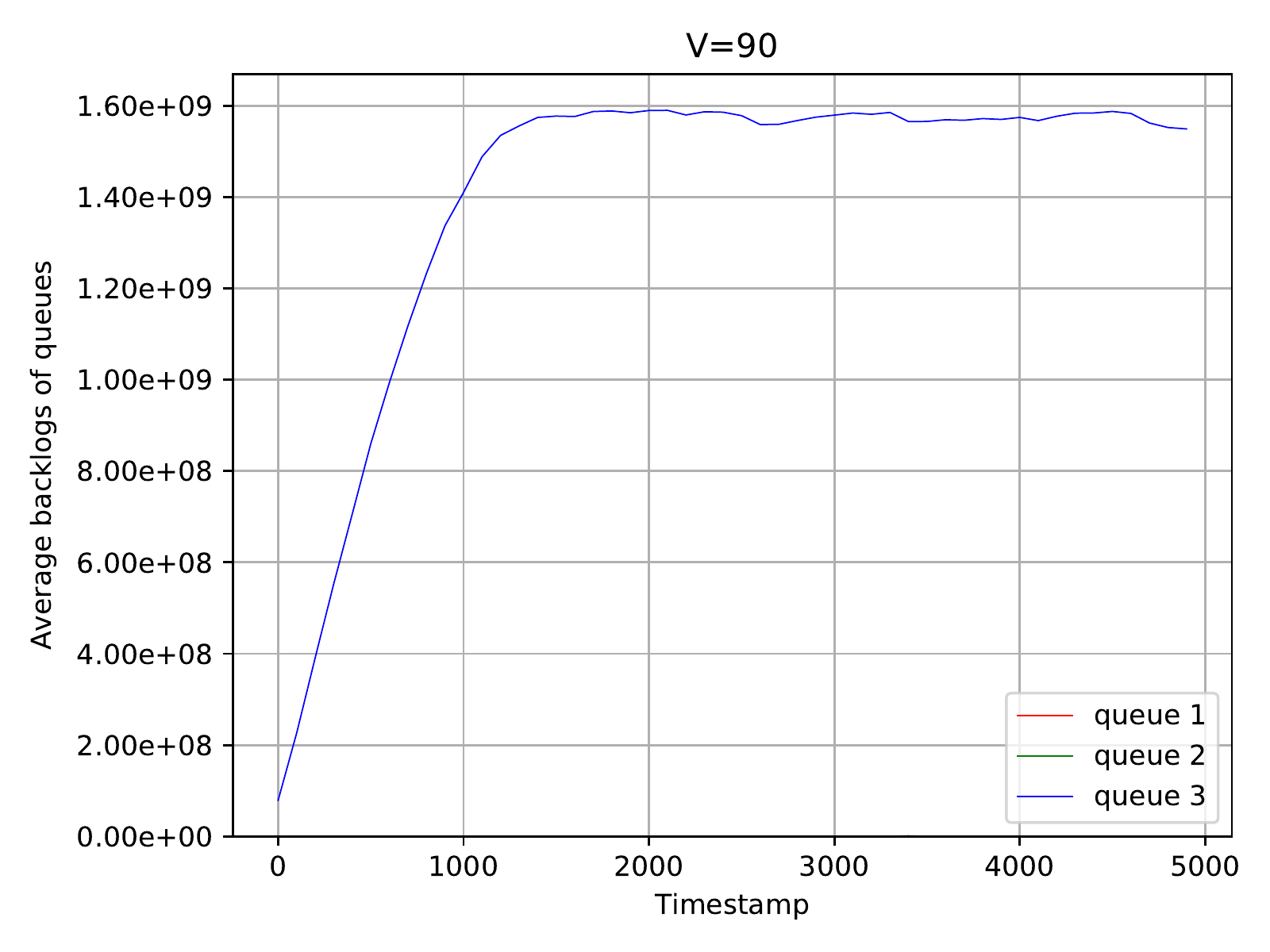}
    }\hfil
    \subfigure[]{
        \centering
        \includegraphics[width=0.45\linewidth]{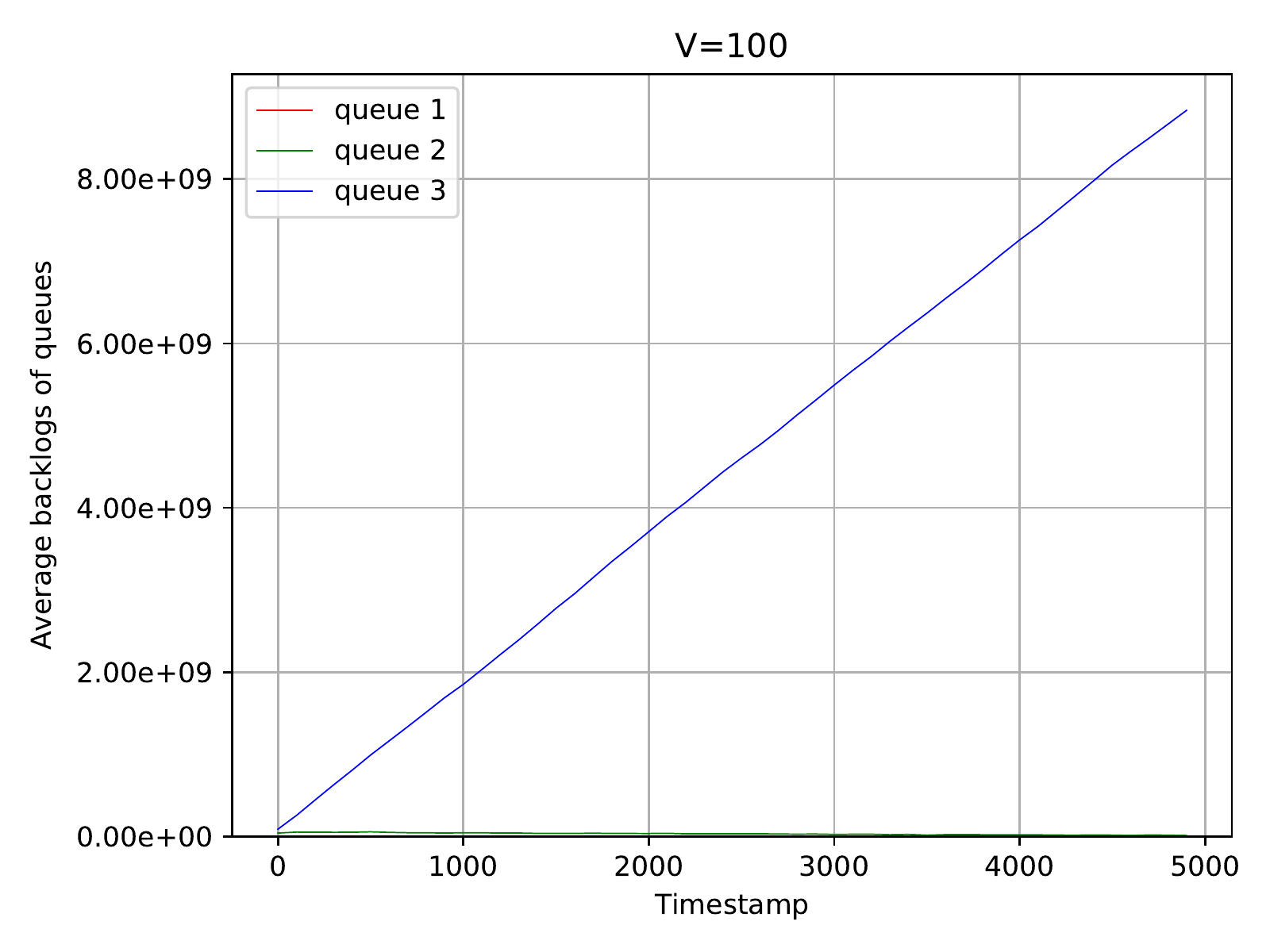}
        }
    \caption{Queue length $\sum_{i=1}^N q_i(t)$ over time: (a) V=30, (b) V=80, (c) V=90, and (d) V=100}
    \label{fig:q_for_step}
    \end{figure}

\subsection{Operation in Higher Action 
Dimensions}

    \begin{table}[h]
    \centering
    \caption{Application parameters}
        \begin{tabular}{|l|c|c|c|}
        \hline
        \rowcolor{lightgray} 
       \begin{tabular}[c]{@{}l@{}}\textbf{Application} \\ \textbf{names}\end{tabular}
       & $w_i$  &  \textbf{Distribution of $d_i$}  &$\lambda_i$ \\ \hline
        \begin{tabular}[c]{@{}l@{}}
        \textbf{Speech}\\ \textbf{recognition}
        \end{tabular} & 10435  & $N_T(170,130,40,300)$(KB)  & 0.5 \\ \hline
        \textbf{NLP}  & 25346 & $N_T(52,48,4,100)$(KB)   & 0.8  \\ \hline
        \begin{tabular}[c]{@{}l@{}}\textbf{Face}\\ \textbf{Recognition}\end{tabular}
         & 45043   & $N_T(55,45,10,100)$(KB) &0.4  \\ \hline
        \textbf{Searching}	&	8405 &	$N_T(51,24.5,2,100)$ (byte) & 10 \\ \hline
        \textbf{Translation}  &	34252 &	$N_T(2501,1249.5,2,5000)$ (byte) &	1 \\ \hline
        \textbf{3d game}  &54633&	$N_T(1.55,0.725,0.1,3)$ (MB) &	0.1 	\\ \hline
        \textbf{VR}   	&40305 &	$N_T(1.55,0.725,0.1,3)$ (MB)  &	0.1 \\ \hline
        \textbf{AR}    &34532 &	$N_T(1.55,0.725,0.1,3)$ (MB)&	0.1 	\\ \hline
        \end{tabular}
    \label{tab:8_app_info}
    \end{table}
    
In the previous experiments, we considered the case of $N=3$, i.e., three queues and the action dimension  was $2N+2=8$. In order to check the operability of the DRL approach in a higher dimensional case, we considered the case of $N=8$. In this case, the action dimension\footnote{In the Mujoco robot simulator for RL algorithm test, the Humanoid task is known to hae high action dimensions given by 17 \cite{Mujoco}.} was $2N+2=18$.
The parameters of the eight application types that we considered are shown in Table  \ref{tab:8_app_info}.  Other parameters and setup were the same as those in the case of $N=3$ and $C_C(t)$ was the discontinuous function used in Section \ref{subsubsec:discontinuousRewFunc}.
From Table  \ref{tab:8_app_info} the average total arrival rate in terms of CPU cycles and task bits per second were 193GHz and  5.14 kbps. Hence, the set up was feasible to control. Fig. \ref{fig:q_for_8apps}(a) shows the corresponding learning curve and Fig. \ref{fig:q_for_8apps}(b) shows the queue length $\sum_{i=1}^N q_i(t)$ over time for one episode for the trained policy in the execution phase. It is seen that even in this case the DRL-based approach properly works.

\begin{figure}[!th]
    \subfigure[]{
    \centering
    \includegraphics[width=0.9\linewidth]{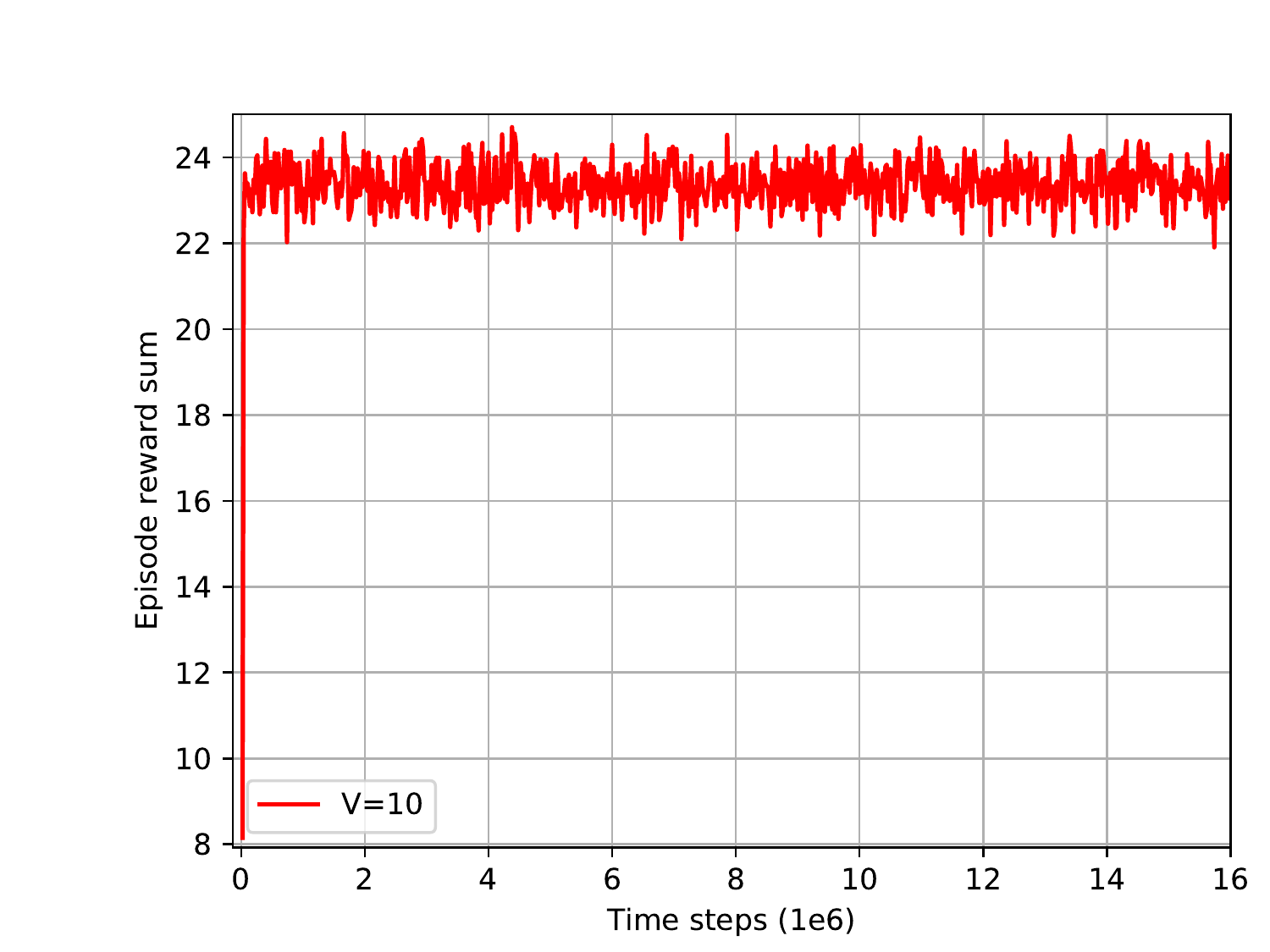}     }
    \subfigure[]{
    \centering
    \includegraphics[width=0.9\linewidth]{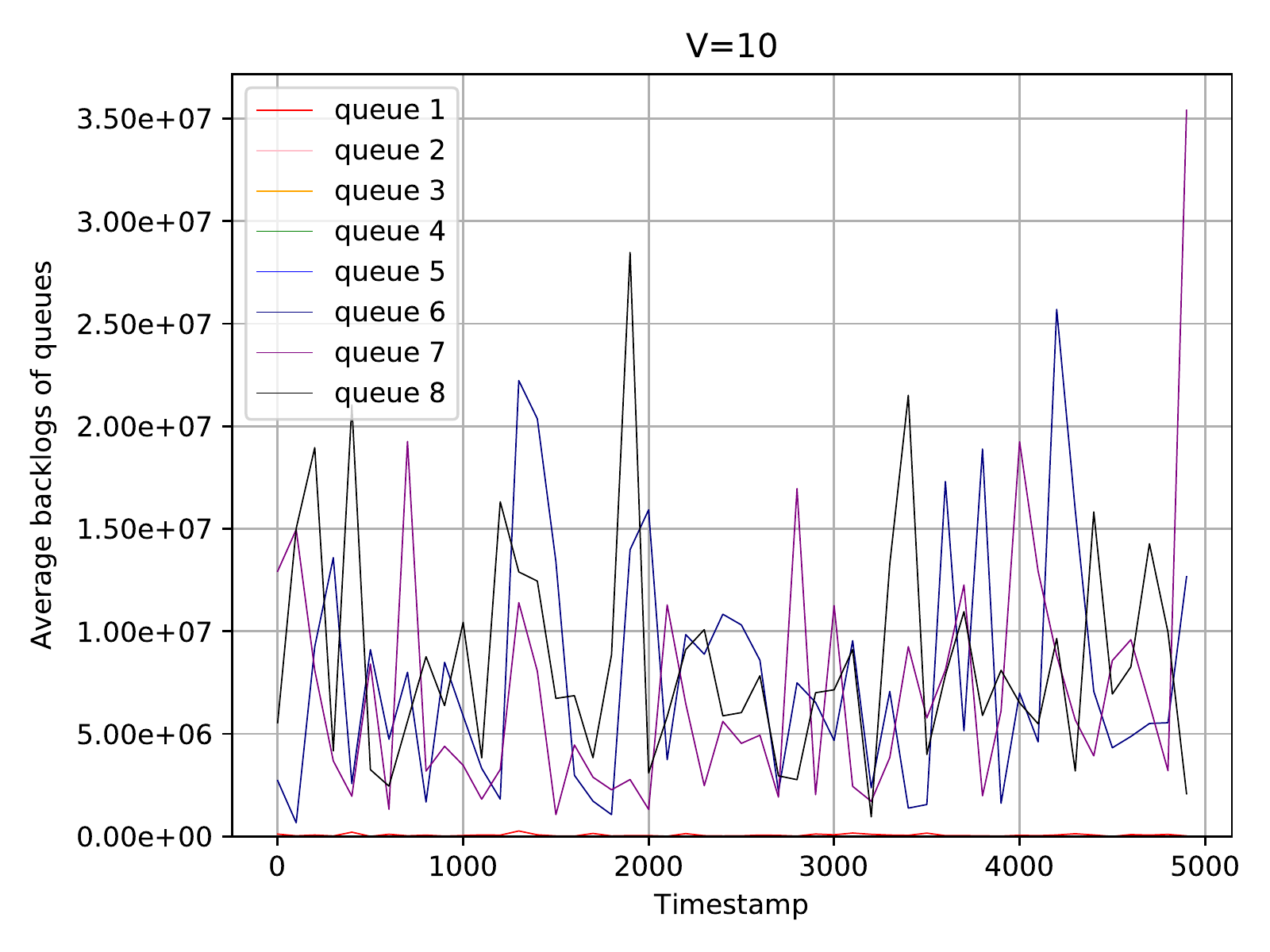}
    }
 \caption{Higher Dimension Case ($\nu=1$): (a) learning curve and (b) queue length over time in the execution phase}
\label{fig:q_for_8apps}
\end{figure}

\section{Conclusion} \label{sec:conclusion}

In this papper, we have considered a DRL-based approach to the Lyapunov optimization that minimizes the time-average penalty cost while maintaining queue stability. We have proposed a proper construction of state and action spaces and a class of reward functions. We have derived  a condition for the reward function of RL for queue stability and have provided a discounted form of the reward for practical RL. With the proposed state and action spaces and the reward function, the DRL approach successfully learns  a policy  minimizing the penalty cost while maintaining queue stability. The proposed DRL-based approach to Lyapunov optimization  does not required complicated optimization at each time step and can operate with general non-convex and discontinuous penalty functions. Thus, it provides an alternative to the conventional DPP algorithm  to the Lyapunov optimization.


\phantomsection
\addcontentsline{toc}{chapter}{Bibliography}
\bibliographystyle{IEEEtran}
\bibliography{201105}

\begin{thebibliography}{10}
\providecommand{\url}[1]{#1}
\csname url@samestyle\endcsname
\providecommand{\newblock}{\relax}
\providecommand{\bibinfo}[2]{#2}
\providecommand{\BIBentrySTDinterwordspacing}{\spaceskip=0pt\relax}
\providecommand{\BIBentryALTinterwordstretchfactor}{4}
\providecommand{\BIBentryALTinterwordspacing}{\spaceskip=\fontdimen2\font plus
\BIBentryALTinterwordstretchfactor\fontdimen3\font minus
  \fontdimen4\font\relax}
\providecommand{\BIBforeignlanguage}[2]{{%
\expandafter\ifx\csname l@#1\endcsname\relax
\typeout{** WARNING: IEEEtran.bst: No hyphenation pattern has been}%
\typeout{** loaded for the language `#1'. Using the pattern for}%
\typeout{** the default language instead.}%
\else
\language=\csname l@#1\endcsname
\fi
#2}}
\providecommand{\BIBdecl}{\relax}
\BIBdecl

\bibitem{neely2005}
M.~Neely, E.~Modiano, and C.-p. Li, ``Fairness and optimal stochastic control
  for heterogeneous networks,'' \emph{IEEE/ACM Transactions on Networking},
  vol.~16, pp. 396--409, Apr. 2008.

\bibitem{neely2010}
M.~J. Neely, ``Stochastic network optimization with application to
  communication and queueing systems,'' \emph{Synthesis Lectures on
  Communication Networks}, vol.~3, no.~1, pp. 1--211, 2010.

\bibitem{tassiulas2006}
L.~Georgiadis, M.~Neely, and L.~Tassiulas, \emph{Resource Allocation and
  Cross-layer Control in Wireless Networks}.\hskip 1em plus 0.5em minus
  0.4em\relax Hanover, MA: Now Publishers, 2006.

\bibitem{network1}
W.~{Bao}, H.~{Chen}, Y.~{Li}, and B.~{Vucetic}, ``Joint rate control and power
  allocation for non-orthogonal multiple access systems,'' \emph{IEEE Journal
  on Selected Areas in Communications}, vol.~35, no.~12, pp. 2798--2811, 2017.

\bibitem{network2}
R.~{Urgaonkar}, U.~C. {Kozat}, K.~{Igarashi}, and M.~J. {Neely}, ``Dynamic
  resource allocation and power management in virtualized data centers,'' in
  \emph{IEEE Network Operations and Management Symposium}, 2010.

\bibitem{network3}
H.~{Zhang}, B.~{Wang}, C.~{Jiang}, K.~{Long}, A.~{Nallanathan}, V.~C.~M.
  {Leung}, and H.~V. {Poor}, ``Energy efficient dynamic resource optimization
  in noma system,'' \emph{IEEE Transactions on Wireless Communications},
  vol.~17, no.~9, pp. 5671--5683, 2018.

\bibitem{network4}
Y.~{Li}, M.~{Sheng}, Y.~{Zhang}, X.~{Wang}, and J.~{Wen}, ``Energy-efficient
  antenna selection and power allocation in downlink distributed antenna
  systems: A stochastic optimization approach,'' in \emph{2014 IEEE
  International Conference on Communications (ICC)}, 2014.

\bibitem{network5}
M.~{Karaca}, K.~{Khalil}, E.~{Ekici}, and O.~{Ercetin}, ``Optimal scheduling
  and power allocation in cooperate-to-join cognitive radio networks,''
  \emph{IEEE/ACM Transactions on Networking}, vol.~21, no.~6, pp. 1708--1721,
  2013.

\bibitem{network6}
H.~{Ju}, B.~{Liang}, J.~{Li}, and X.~{Yang}, ``Dynamic power allocation for
  throughput utility maximization in interference-limited networks,''
  \emph{IEEE Wireless Communications Letters}, vol.~2, no.~1, pp. 22--25, 2013.

\bibitem{tassiulas1990}
L.~{Tassiulas} and A.~{Ephremides}, ``Stability properties of constrained
  queueing systems and scheduling policies for maximum throughput in multihop
  radio networks,'' in \emph{29th IEEE Conference on Decision and Control},
  1990.

\bibitem{tassiulas1993}
------, ``Dynamic server allocation to parallel queues with randomly varying
  connectivity,'' \emph{IEEE Transactions on Information Theory}, vol.~39,
  no.~2, pp. 466--478, 1993.

\bibitem{neely2006energy}
M.~J. Neely, ``Energy optimal control for time-varying wireless networks,''
  \emph{IEEE Transactions on Information Theory}, vol.~52, no.~7, pp.
  2915--2934, 2006.

\bibitem{energyq1}
Y.~{Mao}, J.~{Zhang}, and K.~B. {Letaief}, ``A {Lyapunov} optimization approach
  for green cellular networks with hybrid energy supplies,'' \emph{IEEE Journal
  on Selected Areas in Communications}, vol.~33, no.~12, pp. 2463--2477, 2015.

\bibitem{energyq2}
C.~{Jin}, X.~{Sheng}, and P.~{Ghosh}, ``Optimized electric vehicle charging
  with intermittent renewable energy sources,'' \emph{IEEE Journal of Selected
  Topics in Signal Processing}, vol.~8, no.~6, pp. 1063--1072, 2014.

\bibitem{energyq3}
C.~{Qiu}, Y.~{Hu}, Y.~{Chen}, and B.~{Zeng}, ``Lyapunov optimization for energy
  harvesting wireless sensor communications,'' \emph{IEEE Internet of Things
  Journal}, vol.~5, no.~3, pp. 1947--1956, 2018.

\bibitem{energyq4}
G.~{Zhang}, W.~{Zhang}, Y.~{Cao}, D.~{Li}, and L.~{Wang}, ``Energy-delay
  tradeoff for dynamic offloading in mobile-edge computing system with energy
  harvesting devices,'' \emph{IEEE Transactions on Industrial Informatics},
  vol.~14, no.~10, pp. 4642--4655, 2018.

\bibitem{energyq5}
S.~{Lakshminarayana}, T.~Q.~S. {Quek}, and H.~V. {Poor}, ``Cooperation and
  storage tradeoffs in power grids with renewable energy resources,''
  \emph{IEEE Journal on Selected Areas in Communications}, vol.~32, no.~7, pp.
  1386--1397, 2014.

\bibitem{energyq6}
W.~{Shi}, N.~{Li}, C.~{Chu}, and R.~{Gadh}, ``Real-time energy management in
  microgrids,'' \emph{IEEE Transactions on Smart Grid}, vol.~8, no.~1, pp.
  228--238, 2017.

\bibitem{energyq7}
X.~{Wang}, Y.~{Zhang}, T.~{Chen}, and G.~B. {Giannakis}, ``Dynamic energy
  management for smart-grid-powered coordinated multipoint systems,''
  \emph{IEEE Journal on Selected Areas in Communications}, vol.~34, no.~5, pp.
  1348--1359, 2016.

\bibitem{minh2015dqn}
V.~Mnih, K.~Kavukcuoglu, D.~Silver, A.~A. Rusu, J.~Veness, M.~G. Bellemare,
  A.~Graves, M.~Riedmiller, A.~K. Fidjeland, G.~Ostrovski, S.~Petersen,
  C.~Beattie, A.~Sadik, I.~Antonoglou, H.~King, D.~Kumaran, D.~Wierstra,
  S.~Legg, and D.~Hassabis, ``Human-level control through deep reinforcement
  learning,'' \emph{Nature}, vol. 518, no. 7540, pp. 529--533, Feb. 2015.

\bibitem{drl_ra1}
R.~Li, Z.~Zhao, Q.~Sun, I.~Chih-Lin, C.~Yang, X.~Chen, M.~Zhao, and H.~Zhang,
  ``Deep reinforcement learning for resource management in network slicing,''
  \emph{IEEE Access}, vol.~6, pp. 74\,429--74\,441, 2018.

\bibitem{drl_ra2}
Y.~He, N.~Zhao, and H.~Yin, ``Integrated networking, caching, and computing for
  connected vehicles: A deep reinforcement learning approach,'' \emph{IEEE
  Transactions on Vehicular Technology}, vol.~67, no.~1, pp. 44--55, 2017.

\bibitem{drl_ra3}
X.~Chen, Z.~Zhao, C.~Wu, M.~Bennis, H.~Liu, Y.~Ji, and H.~Zhang, ``Multi-tenant
  cross-slice resource orchestration: A deep reinforcement learning approach,''
  \emph{IEEE Journal on Selected Areas in Communications}, vol.~37, no.~10, pp.
  2377--2392, 2019.

\bibitem{drl_ra4}
U.~Challita, L.~Dong, and W.~Saad, ``Proactive resource management for {LTE} in
  unlicensed spectrum: A deep learning perspective,'' \emph{IEEE Transactions
  on Wireless Communications}, vol.~17, no.~7, pp. 4674--4689, 2018.

\bibitem{sutton1998}
R.~S. Sutton and A.~G. Barto, \emph{Reinforcment Learning: An
  introduction}.\hskip 1em plus 0.5em minus 0.4em\relax MIT Press, 1998.

\bibitem{sac}
T.~Haarnoja, A.~Zhou, P.~Abbeel, and S.~Levine, ``{Soft Actor-Critic}:
  Off-policy maximum entropy deep reinforcement learning with a stochastic
  actor,'' in \emph{Proceedings of the 35th International Conference on Machine
  Learning}, ser. Proceedings of Machine Learning Research, vol.~80.\hskip 1em
  plus 0.5em minus 0.4em\relax Stockholmsmässan, Stockholm Sweden: PMLR,
  10--15 Jul 2018, pp. 1861--1870.

\bibitem{201711.JSAC.Sun}
Y.~{Sun}, S.~{Zhou}, and J.~{Xu}, ``{EMM}: Energy-aware mobility management for
  mobile edge computing in ultra dense networks,'' \emph{IEEE Journal on
  Selected Areas in Communications}, vol.~35, no.~11, pp. 2637--2646, 2017.

\bibitem{201808.ToN.Chen}
L.~{Chen}, S.~{Zhou}, and J.~{Xu}, ``Computation peer offloading for
  energy-constrained mobile edge computing in small-cell networks,''
  \emph{IEEE/ACM Transactions on Networking}, vol.~26, no.~4, pp. 1619--1632,
  2018.

\bibitem{2018006.TWC.Bi}
S.~{Bi} and Y.~J. {Zhang}, ``Computation rate maximization for wireless powered
  mobile-edge computing with binary computation offloading,'' \emph{IEEE
  Transactions on Wireless Communications}, vol.~17, no.~6, pp. 4177--4190,
  2018.

\bibitem{201811.TMC.Neto}
J.~L.~D. {Neto}, S.~{Yu}, D.~F. {Macedo}, J.~M.~S. {Nogueira}, R.~{Langar}, and
  S.~{Secci}, ``{ULOOF}: A user level online offloading framework for mobile
  edge computing,'' \emph{IEEE Transactions on Mobile Computing}, vol.~17,
  no.~11, pp. 2660--2674, 2018.

\bibitem{201906.IoTJ.Chen2}
X.~{Chen}, H.~{Zhang}, C.~{Wu}, S.~{Mao}, Y.~{Ji}, and M.~{Bennis}, ``Optimized
  computation offloading performance in virtual edge computing systems via deep
  reinforcement learning,'' \emph{IEEE Internet of Things Journal}, vol.~6,
  no.~3, pp. 4005--4018, 2019.

\bibitem{2020EA.TMC.Huang}
L.~{Huang}, S.~{Bi}, and Y.~J. {Zhang}, ``Deep reinforcement learning for
  online computation offloading in wireless powered mobile-edge computing
  networks,'' \emph{IEEE Transactions on Mobile Computing}, pp. 1--1, 2019.

\bibitem{201812.TC.Dinh}
T.~Q. {Dinh}, Q.~D. {La}, T.~Q.~S. {Quek}, and H.~{Shin}, ``Learning for
  computation offloading in mobile edge computing,'' \emph{IEEE Transactions on
  Communications}, vol.~66, no.~12, pp. 6353--6367, 2018.

\bibitem{mittal2014power}
S.~Mittal, ``Power management techniques for data centers: A survey,''
  \emph{arXiv preprint arXiv:1404.6681}, 2014.

\bibitem{CubicFunction}
A.~Varghese, J.~Milthorpe, and A.~P. Rendell, ``Performance and energy analysis
  of scientific workloads executing on {LPSoCs},'' in \emph{Parallel Processing
  and Applied Mathematics}, R.~Wyrzykowski, J.~Dongarra, E.~Deelman, and
  K.~Karczewski, Eds.\hskip 1em plus 0.5em minus 0.4em\relax Cham: Springer
  International Publishing, 2018, pp. 113--122.

\bibitem{mangard2008power}
S.~Mangard, E.~Oswald, and T.~Popp, \emph{Power Analysis Attacks: Revealing the
  Secrets of Smart Cards}.\hskip 1em plus 0.5em minus 0.4em\relax Springer
  Science \& Business Media, 2008.

\bibitem{voltage1}
T.~{Ishihara} and H.~{Yasuura}, ``Voltage scheduling problem for dynamically
  variable voltage processors,'' in \emph{Proceedings. 1998 International
  Symposium on Low Power Electronics and Design}, 1998.

\bibitem{voltage2}
N.~B. Rizvandi, A.~Y. Zomaya, Y.~C. Lee, A.~J. Boloori, and J.~Taheri,
  \emph{Multiple Frequency Selection in DVFS-Enabled Processors to Minimize
  Energy Consumption}.\hskip 1em plus 0.5em minus 0.4em\relax John Wiley \&
  Sons, Ltd, ch.~17, pp. 443--463.

\bibitem{CubicPoor}
C.~{Liu}, M.~{Bennis}, M.~{Debbah}, and H.~V. {Poor}, ``Dynamic task offloading
  and resource allocation for ultra-reliable low-latency edge computing,''
  \emph{IEEE Transactions on Communications}, vol.~67, no.~6, pp. 4132--4150,
  2019.

\bibitem{CubicMao}
Y.~{Mao}, J.~{Zhang}, S.~H. {Song}, and K.~B. {Letaief}, ``Power-delay tradeoff
  in multi-user mobile-edge computing systems,'' in \emph{2016 IEEE Global
  Communications Conference (GLOBECOM)}, 2016.

\bibitem{wang2020reinforcement}
J.~Wang, Y.~Liu, and B.~Li, ``Reinforcement learning with perturbed rewards,''
  \emph{arXiv preprint arXive:1810.01032}, 2020.

\bibitem{khan2012reinforcement}
S.~G. Khan, G.~Herrmann, F.~L. Lewis, T.~Pipe, and C.~Melhuish, ``Reinforcement
  learning and optimal adaptive control: An overview and implementation
  examples,'' \emph{Annual Reviews in Control}, vol.~36, no.~1, pp. 42--59,
  2012.

\bibitem{dac}
S.~Han and Y.~Sung, ``Diversity actor-critic: Sample-aware entropy
  regularization for sample-efficient exploration,'' \emph{arXiv preprint
  arXiv:2006.01419}, 2020.

\bibitem{SoheeGit}
S.~Bae, ``Mobile edge computing environment with {SAC} algorithm,''
  \url{https://github.com/sosam002/KAIST_MEC_simulator/tree/master/MCES_sac_TON},
  2020.

\bibitem{nocedal2006sequential}
J.~Nocedal and S.~J. Wright, ``Sequential quadratic programming,''
  \emph{Numerical Optimization}, pp. 529--562, 2006.

\bibitem{Scutari:book}
G.~Scutari and Y.~Sun, \emph{Parallel and Distributed Successive Convex
  Approximation Methods for Big-Data Optimization}.\hskip 1em plus 0.5em minus
  0.4em\relax C.I.M.E Lecture Notes in Mathematics, Springer Verlag Series,
  2018.

\bibitem{Mujoco}
E.~Todorov, T.~Erez, and Y.~Tassa, ``Mujoco: A physics engine for model-based
  control,'' in \emph{2012 IEEE/RSJ International Conference on Intelligent
  Robots and Systems}, 2012.

\end{thebibliography}

\end{document}